\DeclareMathSymbol{\widehatsym}{\mathord}{largesymbols}{"62}
\newcommand\lowerwidehatsym{%
  \text{\smash{\raisebox{-1.3ex}{%
    $\widehatsym$}}}}
\newcommand\fixwidehat[1]{%
  \mathchoice
    {\accentset{\displaystyle\lowerwidehatsym}{#1}}
    {\accentset{\textstyle\lowerwidehatsym}{#1}}
    {\accentset{\scriptstyle\lowerwidehatsym}{#1}}
    {\accentset{\scriptscriptstyle\lowerwidehatsym}{#1}}
}
\DeclareMathSymbol{\widetildesym}{\mathord}{largesymbols}{"62}
\let\l@ENGLISH\l@english
\newcommand{\fset}{{\mathcal F}}
\newcommand{\iset}{{\mathcal I}}
\newcommand{\hset}{{\mathcal H}}
\newcommand{\lset}{{\mathcal L}}
\newtheorem{theorem}{Theorem}
\newtheorem{proposition}{Proposition}
\newcommand{\btheo}{\begin{theorem}}
\newcommand{\etheo}{\end{theorem}}
\newcommand{\bproof}{\begin{proof}}
\newcommand{\eproof}{\end{proof}}
\newtheorem{definition}[theorem]{Definition}
\newcommand{\bdefi}{\begin{definition}}
\newcommand{\edefi}{\end{definition}}
\newtheorem{fact}[theorem]{Fact}
\newcommand{\bprop}{\begin{fact}}
\newcommand{\eprop}{\end{fact}}
\newtheorem{corollary}[theorem]{Corollary}
\newcommand{\bcor}{\begin{corollary}}
\newcommand{\ecor}{\end{corollary}}
\newtheorem{example}[theorem]{Example}
\newcommand{\bex}{\begin{example}}
\newcommand{\eex}{\end{example}}
\newtheorem{lemma}[theorem]{Lemma}
\newcommand{\blemma}{\begin{lemma}}
\newcommand{\elemma}{\end{lemma}}
\newtheorem{remark}[theorem]{Remark}
\newcommand{\bremark}{\begin{remark}}
\newcommand{\eremark}{\end{remark}}
\newtheorem{conj}[theorem]{Conjecture}
\newcommand{\bconj}{\begin{conj}}
\newcommand{\econj}{\end{conj}}
\def\0{{\tt 0}} 
\def\1{{\tt 1}} 
\def\?{{\tt *}} 
\renewcommand{\mid}{\,|\,}
\newcommand{\dens}[1]{\mathsf{#1}}
\newcommand{\Ldens}[1]{\dens{#1}}
\newcommand{\dr}{{\mathtt r}}
\newcommand{\dl}{{\mathtt l}}
\begin{document}

\title{How to Achieve the Capacity \\ of Asymmetric Channels}
\author{Marco~Mondelli, S.~Hamed~Hassani, and~R\"{u}diger~Urbanke%
\thanks{M. Mondelli is with the Department of Electrical Engineering, Stanford University, USA
(e-mail: mondelli@stanford.edu).

S. H. Hassani is with the Department of Electrical and Systems Engineering, University of Pennsylvania, USA
(e-mail: hassani@seas.upenn.edu).

R. Urbanke is with the School of Computer and Communication Sciences,
EPFL, CH-1015 Lausanne, Switzerland
(e-mail: ruediger.urbanke@epfl.ch).

The paper was presented in part at the 52nd Annual Allerton Conference on Communication, Control, and Computing, 2014.
}
}

\maketitle

\begin{abstract}
We survey coding techniques that enable reliable
transmission at rates that approach the capacity of an arbitrary
discrete memoryless channel. In particular, we take the point of
view of modern coding theory and discuss how recent advances in
coding for symmetric channels help provide more efficient solutions
for the \emph{asymmetric} case. We consider, in more detail, three
basic coding paradigms.

The first one is \emph{Gallager's} scheme that consists of concatenating
a linear code with a non-linear \emph{mapping} so that the input distribution
can be appropriately shaped. We explicitly show that both polar
codes and spatially coupled codes can be employed in this scenario.
Furthermore, we derive a scaling law between the gap to capacity,
the cardinality of the input and output alphabets, and the required
size of the mapper.

The second one is an \emph{integrated scheme}
in which the code is used \emph{both} for source coding, in order
to create codewords distributed according to the capacity-achieving
input distribution, \emph{and} for channel coding, in order to provide
error protection. Such a technique has been recently introduced by
Honda and Yamamoto in the context of polar codes, and we show how
to apply it also to the design of sparse graph codes. 

The third paradigm is based on an idea of B\"ocherer and Mathar, and separates the two tasks of source coding and channel coding by a \emph{chaining construction} that binds together
several codewords. We present conditions for the source code and the channel code, and we describe how to combine {\em any} source code with {\em any} channel code that fulfill those conditions, in order to provide capacity-achieving schemes for asymmetric channels. In particular, we show that polar codes, spatially coupled codes, and homophonic codes are suitable as basic building blocks of the proposed coding strategy.

Rather than focusing on the exact details of the schemes, the purpose of this tutorial is to present different coding techniques that can then be implemented with many variants. There is no absolute winner and, in order to understand the most suitable technique for a specific application scenario, we provide a detailed comparison that takes into account several performance metrics. 

\end{abstract}

\begin{IEEEkeywords}
Asymmetric channel, capacity-achieving coding scheme, chaining construction, Gallager's scheme, polar codes, spatially coupled codes.
\end{IEEEkeywords}

\section{Introduction}

We survey various coding techniques for achieving the capacity of an \emph{asymmetric} channel. There are at least three approaches for this problem, and each of them has multiple variants. Before describing these asymmetric coding techniques, let us quickly review how to solve the problem in the symmetric scenario. 

\subsection{Symmetric Channel Coding: A Review}

Let $W$ be a symmetric
binary-input, discrete memoryless channel (B-DMC) and denote by
$C(W)$ its capacity. Two schemes that are capable
of achieving the capacity for this class of channels are polar codes
\cite{Ari09} and spatially coupled codes \cite{KRU13}.

Polar codes are closely related to Reed-Muller (RM) codes, as,
in both cases, the rows of the generator matrix of a code of block
length $n =2^m$ are chosen from the rows of the matrix $G_n =
\left[\begin{array}{cc} 1 & 0 \\ 1 & 1 \\ \end{array}\right]^{\otimes
m}$, where $\otimes$ denotes the Kronecker product. The crucial
difference between polar and RM codes lies in the choice of the
rows. The RM rule consists in choosing the rows with the largest
Hamming weights, whereas the polar rule depends on the channel. To
explain how this last rule works, let us recall that polar codes
are decoded with a successive cancellation (SC) algorithm that
makes decisions on the bits one-by-one in a pre-chosen order. In
particular, in order to decode the $i$-th bit, the algorithm uses
the values of all the previous $i-1$ decisions. Therefore, the polar
rule chooses the rows of $G_n$ that correspond to the most reliable synthetic channels when decoded in this
successive manner. A rule that interpolates between the polar and the RM one is proposed in \cite{mondelli-polarRM}, in order to improve the finite-length performance. 

The complexity of encoding and decoding of polar codes scales
as $O(n \log n)$ \cite{Ari09}. The code construction can
be performed with complexity $\Theta(n)$ \cite{TV13con, RHTT}. Furthermore,
by taking advantage of the partial order between the synthetic
channels, it is possible to construct a polar code by computing
the reliability of a sublinear fraction of the synthetic channels \cite{MHU17subl}. If a rate $R<C(W)$ is fixed and $n$ goes to
$\infty$, then the block error probability $P_e$ scales as $O(2^{-\sqrt{n}})$
\cite{ArT09}. If the error probability $P_e$ is fixed and $n$ goes to $\infty$, then the gap to capacity $C(W)-R$ scales as $O(n^{-1/\mu})$.
The scaling exponent $\mu$ is lower bounded by $3.579$ and upper
bounded by $4.714$ \cite{HAU14, GB14, MHU15}. It is conjectured that
the lower bound on $\mu$ can be increased up to $3.627$, i.e., to the value for the binary erasure channel (BEC). For a unified view on the performance analysis and on the scaling of the relevant parameters of polar codes, see \cite{MHU15}.

Spatially coupled LDPC (SC-LDPC) codes are constructed by connecting
together $L$ standard LDPC codes in a chain-like fashion and by
terminating this chain so that the decoding task is
easier near the boundary. The decoding is done using the belief-propagation (BP) algorithm that has linear complexity in the block length per iteration.
Given a desired gap to capacity $C(W)-R$, we can choose an appropriate
degree distribution for the component codes, as well as an appropriate
connection pattern of the chain and length of the code, so that the
resulting ensemble enables reliable transmission over any B-DMC \cite{KRU13}. For a discussion on the scaling behavior of spatially
coupled codes the reader is referred to \cite{OU14}.  In Section~\ref{sc-tradeoff}, we show via heuristic arguments that for SC-LDPC codes, if the error probability $P_e$ is fixed, the gap to capacity $C(W)-R$ scales as $O(n^{-1/3})$, i.e., the scaling exponent $\mu$ is equal to $3$.

Polar coding schemes have been generalized to arbitrary symmetric DMCs.
Channel polarization for $q$-ary input alphabets is discussed
in \cite{STA09} and more general constructions based on arbitrary
kernels are described in \cite{MT10}. Furthermore, polar codes have been built exploiting various algebraic structures on the input
alphabet \cite{PB13, SaP13, NT16, Nas17a, Nas17b}. On the contrary, the capacity-achieving nature of SC-LDPC codes (and even threshold saturation) has been proved only for binary-input channels and it remains an open problem for the transmission over arbitrary symmetric DMCs. Results concerning the iterative decoding threshold on the BEC for non-binary SC-LDPC code ensembles are presented
in \cite{UKS11, PAC13}, and the corresponding threshold saturation
is proved in \cite{AA16}. The threshold analysis under windowed
decoding is provided in \cite{WKMFC14}.    

\subsection{Asymmetric Channel Coding: Existing and Novel Techniques}

The simplest example of \emph{asymmetric} DMC is the Z-channel,
which is schematically represented in Figure \ref{fig:Zch}: the
input symbol $0$ is left untouched by the channel, whereas the input
symbol $1$ is flipped with probability $\varepsilon$. The basic
problem that we face when transmitting over this channel is that
(proper) linear codes impose a uniform input distribution, whereas
the capacity-achieving input distribution for an asymmetric channel
is, in general, non-uniform. Indeed, for the case of the Z-channel,
the capacity-achieving distribution assigns to the symbol $1$ a
probability $\varepsilon^{\varepsilon/\bar{\varepsilon}}
\cdot(1+\bar{\varepsilon}\varepsilon^{\varepsilon/\bar{\varepsilon}})^{-1}$,
where we set $\bar{\varepsilon} = 1-\varepsilon$~\cite[Formula
(5.10)]{RiU08}. This mismatch in the input distribution bounds the
achievable transmission rate away from capacity (for linear codes).

It is worth pointing out that, at least for binary inputs, the
optimal distribution is not too far from the uniform one, in the
sense that the capacity-achieving input distribution always has a
marginal in the interval $(1/e, 1-1/e)$~\cite{MaR91}. In addition,
a fraction of at most $1-\frac12 e \ln(2) \approx 0.058$ of capacity
is lost if we use the uniform input distribution instead of the
optimal input distribution \cite{MaR91}. This result was later
strengthened in \cite{ShF04}, where the Z-channel is proved to be
extremal in this sense. As for channels with more than 2 inputs,
the upper bound $1-1/e$ to the range of the capacity-achieving
distribution still holds~\cite{Lia04}, but the lower bound $1/e$
is false. Given that the loss incurred by using a uniform input
distribution is relatively modest, why do we care about the problem
of achieving the full capacity of asymmetric channels? First of
all, it is an interesting theoretical problem. Second, over time,
all communication systems are increasingly optimized to take full
advantage of their capabilities, and even small gains become
significant.

\begin{figure}[t]
    \centering
\psset{arrowscale=1}
\psset{unit=0.5cm}
\psset{xunit=1,yunit=1}
\begin{pspicture}(0,0)(10,10)
\psline[linecolor=black,linewidth=0.7pt]{->}(2,2)(8,2)
\psline[linecolor=black,linewidth=0.7pt]{->}(2,2.1)(8,5.9)
\psline[linecolor=black,linewidth=0.7pt]{->}(2,6)(8,6)
\rput[c](2,1.5){\small{$1$}}
\rput[c](7.9,1.5){\small{$1$}}
\rput[c](2,6.5){\small{$0$}}
\rput[c](7.9,6.5){\small{$0$}}
\rput[c](5,1.4){\small{$\bar{\varepsilon}$}}
\rput[c](5,3.5){\small{$\varepsilon$}}
\end{pspicture}
\caption{Schematic representation of the $Z$-channel with parameter $\varepsilon$.}
\label{fig:Zch}
\end{figure}
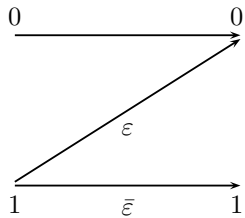

The classic solution to the problem of coding over asymmetric
channels goes back to Gallager and consists of concatenating a
linear code with a non-linear mapper so that the input distribution
becomes biased \cite{Gal68}. In \cite{McE01}, McEliece described
how this can be done successfully with iterative codes. We refer
to this approach as {\em Gallager's mapping} and we discuss
how any capacity-achieving coding scheme can be used for this
setting. In particular, by combining either polar codes or
spatially coupled codes with suitable non-linear mappers, we can
approach capacity arbitrarily closely. More specifically, we derive
a scaling law that relates the gap to capacity to the mismatch in
the actual input distribution and to the size of the channel
alphabets.

More recently, polar codes have been used to achieve the capacity
of binary-input asymmetric DMCs. In particular, in \cite{SRDR12}
the authors propose a solution that makes use of the concatenation
of two polar codes: one of them is used to solve a source coding
problem, in order to have codewords distributed according to the
capacity-achieving input distribution; the other is used to solve
a channel coding problem, in order to provide error correction.
However, such a scheme requires polarization for both the inner and
the outer codes, therefore the error probability scales roughly as
$2^{-n^{1/4}}$. Thus, in order to obtain the same performance as
standard polar codes, the square of their block length is required.
A very simple and more efficient solution for transmission over
asymmetric channels is presented in \cite{HY13} and this idea also
appears in \cite{GAG13ar}, where it is applied to the transmission
over the broadcast channel. More specifically, in order to transmit
over channels whose optimal input distribution is non-uniform, the
polar indices are partitioned into three groups: some are used for
information transmission; some are used to ensure that the input
distribution is properly biased; and some carry random bits shared
between the transmitter and the receiver.  The purpose of this
shared randomness is to facilitate the performance analysis.  Indeed,
as in the case of LDPC code ensembles, the error probability is
obtained by averaging over the randomness of the ensemble. In short,
the methods in \cite{SRDR12, HY13} exploit the fact that polar codes
are well suited not only for channel coding but also for lossless
source coding \cite{Ar10, CrKo10}. Clearly, this is not a prerogative
only of polar codes as, for example, sparse graph codes have been
successfully used for both channel coding and source coding purposes
\cite{CSV04dimacs}. Motivated by this fact, we describe a scheme
based on spatially coupled codes that achieves the capacity of
asymmetric DMCs by solving both a source coding and a channel coding
problem at the same time. A similar scheme was proposed in \cite{MuMi08} for lossy source coding and in \cite{MuMi09} for channel coding. Later, these ideas were unified and extended to several other problems (Slepian-Wolf, Gelfand-Pinsker, Wyner-Ziv, and One-helps-one problems), by using the notion of hash property of LDPC matrices \cite{MuMi10}. However, note that the existing works \cite{MuMi08,MuMi09,MuMi10} require MAP decoding, whereas we consider a low-complexity coding strategy. As it will be explained in detail later, this solution still does not have a formal proof, but it has
been verified in numerical simulations~\cite{AMV15, KVNP14}. We refer to this
technique as the {\em integrated scheme}, because we use
one code for both source {\em and} channel coding.

This brings us to the third coding paradigm. By ``chaining'' together
several codewords, we can decouple the problem of source coding
(creating a biased codeword from unbiased bits) from the problem
of channel coding (providing error correction).  The idea is based
on \cite{BoM11}, where the authors refer to it as the {\em bootstrap}
scheme. We prefer to use the name {\em chaining construction} that
was introduced in \cite{HRunipol}, where a similar approach was
used to design universal polar codes. The chaining construction is
a general method, and it was applied to devise polar coding schemes
for the wiretap channel \cite{SaV13} and for the broadcast channel
\cite{MHSU14}.  We show how to chain {\em any} suitable source
coding solution with {\em any} suitable channel coding solution,
in order to transmit over an asymmetric channel. We give explicit
conditions on the source and the channel code so that the overall
scheme is capacity-achieving. Furthermore, we prove that polar codes and homophonic codes can be used for the source coding part, and that polar codes and spatially coupled codes can be used for the channel coding part.

In a nutshell, this paper describes three different paradigms to
achieve the capacity of asymmetric channels and, as such, it is of
tutorial nature. Motivated by the recent advances in coding for
symmetric channels, we show that it is now possible to construct
efficient schemes also for the asymmetric case. As a result, we
demonstrate that perhaps what was once considered as a difficult
problem is in fact quite easy to solve with existing ``primitives''.
The three paradigms presented are quite general and should be
regarded as ``meta-schemes'' that can then be made more specific
by using a certain class of codes (e.g., polar codes or spatially coupled
codes) according to the particular scenario of interest. For this
reason, the interest of this paper is more in describing generic
coding ideas rather than in presenting formal proofs and providing
all the details for each scheme. With the objective of highlighting
the pros and cons of these coding approaches, we present them in a
unified manner and provide a detailed comparison  with a focus on
several features crucial in applications, such as, error probability,
rate penalty, computational complexity, universality, and
access to common randomness.

The rest of the paper is organized as follows. In Section~\ref{sec:twoprob}
we discuss two related and simpler problems, specifically, how to
achieve the {\em symmetric} capacity of an {\em asymmetric} channel
and how to perform error correction using biased codewords. The
proposed solutions to these problems will be later used as
``primitives'' to solve the transmission problem over asymmetric
channels. Then, we describe in three consecutive sections the coding
paradigms for achieving the capacity of an asymmetric DMC: Gallager's
mapping in Section~\ref{sec:gallager}, the integrated scheme in
Section~\ref{sec:integrated}, and the chaining construction in
Section~\ref{sec:chaining}. In Section \ref{sec:comp}, we provide
a comparison between these three different approaches. The metrics
taken into account are the error probability, the rate penalty, the
computational complexity, the universality, and the use
of common randomness. Finally, in Section \ref{sec:concl} we provide
some concluding remarks.

\section{Warm Up: Two Related Problems}\label{sec:twoprob}

After establishing the notation and reviewing some known concepts,
in this preliminary section we consider two related, but different,
problems that will be regarded as useful primitives in the following
sections. In particular, in Section \ref{subsec:symmcap}, we discuss how
to achieve the symmetric capacity of an asymmetric B-DMC, and in Section
\ref{subsec:checks} we describe how to transmit reliably
a biased binary codeword.

\subsection{Notation and Prerequisites}

Throughout this paper, we consider the transmission over a DMC $W: {\mathcal X}\to {\mathcal Y}$ with input alphabet $\mathcal X$ and output alphabet $\mathcal Y$. If the channel is binary-input, we usually take $\mathcal X = {\mathbb F}_2=\{0, 1\}$ and we say that $X$ is a Bernoulli$(\alpha)$ random variable if ${\mathbb P}(X=1) = \alpha$ for some $\alpha \in [0, 1]$. However, for the analysis of LDPC ensembles, it is convenient to consider the standard mapping $0 \longleftrightarrow 1$ and $1 \longleftrightarrow -1$. It will be clear from the context whether the input alphabet is $\{-1, 1\}$ or $\{0, 1\}$. The probability of the output being $y$ given an input $x$ is denoted by $W(y\mid x)$ and the probability of the input being $x$ given an output $y$ is denoted by $p_{X\mid Y}(x\mid y)$. We write $C(W)$ and $C_{\rm s}(W)$ to indicate the capacity and the symmetric capacity of $W$, respectively. Given the scalar components $X^{(i)}, \cdots, X^{(j)}$ and $X_{i}, \cdots, X_{j}$, we use $X^{i:j}$ as a shorthand
for the column vector $(X^{(i)}, \cdots, X^{(j)})^T$ and, similarly, $X_{i:j}$ as a shorthand for the column vector $(X_{i}, \cdots, X_{j})^T$ with $i \le j$. The index set $\{1, \cdots, n\}$
is abbreviated as $[n]$ and, given a set $\mathcal A\subseteq [n]$, we denote
by ${\mathcal A}^{\rm c}$ its complement. We denote by $\log$ and $\ln$ the logarithm in base $2$ and base $e$, respectively. For any $x\in [0, 1]$, we define $\bar{x} = 1-x$. The binary entropy function is given by $h_2(x) = -x\log x - \bar{x}\log\bar{x}$. When discussing sparse graph coding schemes, the parity-check matrix is denoted by $P$. We do not use $H$ to denote the parity-check matrix, as it is done more frequently, because the symbol $H(\cdot)$ indicates the entropy of a random variable. When discussing polar coding schemes, we assume that the block length $n$ is a
power of $2$, say $n=2^m$ for $m \in {\mathbb N}$, and we denote
by $G_n$ the polar matrix given by $G_n = \left[\begin{array}{cc}
1 & 0 \\ 1 & 1 \\ \end{array}\right]^{\otimes m}$, where $\otimes$ denotes
the Kronecker product.

Let us recall some notation and facts concerning B-DMCs. This part is telegraphic and the reader is referred to \cite{RiU08} for more details.
First, consider a symmetric
B-DMC with ${\mathcal X} = \{-1, 1\}$. Assume that $X$ is transmitted,
$Y$ is the received observation, and $L(Y)$ the corresponding
log-likelihood ratio, namely for any $y\in \mathcal Y$,
\begin{equation}\label{eq:loglike} L(y) = \ln \frac{W(y\mid 1)}{W(y\mid
-1)}.  \end{equation} Let us denote by $\Ldens{a}$ the density of
$L(Y)$ assuming that  $X=1$ and let us call it an $L$-density.

We say that an $L$-density $\Ldens{a}$ is symmetric if
\begin{equation}\label{eq:symm}
\Ldens{a}(y)= e^{y}\Ldens{a}(-y).
\end{equation}
Since the log-likelihood ratio constitutes a sufficient statistic
for decoding, two symmetric B-DMCs are equivalent if they have the
same $L$-density. A meaningful choice for the representative of
each equivalence class is $W(y\mid 1)=\Ldens{a}(y)$ and, by symmetry,
$W(y\mid -1)=\Ldens{a}(-y)$. Indeed, by using the assumption
\eqref{eq:symm}, we can show that this choice of $W(y\mid x)$
yields an $L$-density equal to $\Ldens{a}(y)$ (see \cite[Lemma
4.28]{RiU08}).

As a final reminder, the capacity $C(W)$ can be computed as a function
of the $L$-density $\Ldens{a}$ according to the following formula
\cite[Lemma 4.35]{RiU08}, \begin{equation}\label{eq:capacity} C(W)=
\int \Ldens{a}(y)\left(1-\log(1+e^{-y})\right)dy.  \end{equation}

\subsection{How to Achieve the Symmetric Capacity of Asymmetric Channels}\label{subsec:symmcap} 

\noindent {\bf Problem Statement.} Let $W$ be a (not necessarily symmetric) B-DMC. The aim is to transmit over $W$ with a rate close to $C_{\rm s}(W)$.

\vspace{1em}

\noindent {\bf Design of the Scheme.} The original construction of polar codes
directly achieves the symmetric capacity of any B-DMC \cite{Ari09}.

For sparse graph codes, some more analysis is required. 
Here, we will follow a line of reasoning inspired by \cite[Section 5.2]{RiU08}. A similar approach was first introduced in \cite{Huo03} and an alternative path that considers the average of the density
evolution analysis with respect to each codeword, is considered
in \cite{WKP03}. All these techniques lead to the same result.

The codebook of a code of block length $n$ and rate $R$ with parity
check matrix $P$ is given by the set of $x^{1:n} \in {\mathbb
F}_2^{n}$ s.t. $P x^{1:n} =0^{1:(1-R)n}$, where $0^{1:(1-R)n}$
denotes a column vector of $(1-R)n$ zeros. In words, the transmitter
and the receiver know that the results of the parity checks are all
zeros.  Let us consider a slightly different model in which the
values of the parity checks are chosen uniformly at random and this
randomness is shared between the transmitter and the receiver:
first, we pick the parity checks uniformly at random; then, we pick
a codeword uniformly at random among those that satisfy the parity
checks.  Clearly, this is equivalent to picking directly one codeword
chosen uniformly at random from the whole space ${\mathbb F}_2^{n}$.
As a result, we can model the codeword as a sequence of $n$ uniform
i.i.d. bits. Note that, in \cite{Huo03}, instead of randomizing the
cosets, the authors add a random scrambling vector to the entire
codeword before transmission and then subtract it afterwards. The
random scrambling and de-scrambling is absorbed into a normalized
channel that is automatically symmetric, hence the standard density
evolution equations hold. The concentration theorem (for a random
code, scrambling vector, and channel realization) also follows by
absorbing the scrambling bit into the randomness of the channel.
This scrambling idea is also used in \cite{W15al}, where the authors
explore the connection between symmetric channel coding, general
channel coding, symmetric Slepian-Wolf coding, and general Slepian-Wolf
coding.

Since the channel can be asymmetric, we need to define two distinct
$L$-densities according to the transmitted value. For simplicity,
let us map the input alphabet ${\mathbb F}_2$ into $\{-1, 1\}$ and
denote by $\Ldens{a}^+(y)$ and $\Ldens{a}^-(y)$ the $L$-density for
the channel assuming that $X=1$ and $X=-1$ is transmitted,
respectively. Let us now flip the density associated with $-1$, i.e.,
we consider $\Ldens{a}^-(-y)$, so that positive values indicate ``correct''
messages. By the symmetry of the message-passing equations (see
Definition 4.81 in \cite{RiU08}), the sign of all those messages
that enter or exit the variable nodes with associated transmitted
value $-1$ is flipped as well. Therefore, the density evolution
analysis for a particular codeword is equivalent to that for the
all-$1$ codeword, provided that we initialize the variable nodes
with associated value $1$ and $-1$ to $\Ldens{a}^+(y)$ and
$\Ldens{a}^-(-y)$, respectively.  Now, each transmitted bit is
independent and uniformly distributed. Thus, we pick a variable
node with $L$-density $\Ldens{a}^+(y)$ with probability $1/2$ and
with $L$-density $\Ldens{a}^-(-y)$ with probability $1/2$. As a
result, the density evolution equations for our asymmetric setting
are the same as those for transmission over the ``symmetrized
channel'' with $L$-density given by
\begin{equation}\label{eq:ldenssymm}
\Ldens{a}^{\rm s}(y) = \frac{1}{2}(\Ldens{a}^-(-y)+\Ldens{a}^+(y)).
\end{equation}
This channel is, indeed, symmetric and its capacity equals the
symmetric capacity of the actual channel $W$ over which transmission
takes place. These two results are formalized by the propositions
below that are proved in Appendix \ref{app:proofsecII1}.

\begin{proposition}\label{prop:symm}
Consider the transmission over a B-DMC and let $\Ldens{a}^+(y)$ and $\Ldens{a}^-(y)$ be the
$L$-densities assuming that $X=1$ and $X=-1$ is transmitted, respectively. Then, the $L$-density $\Ldens{a}^{\rm s}(y)$ given by \eqref{eq:ldenssymm} is symmetric.
\end{proposition}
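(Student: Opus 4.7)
The plan is to establish the single key identity relating $\Ldens{a}^+$ and $\Ldens{a}^-$, namely
\[
\Ldens{a}^+(y) = e^{y}\,\Ldens{a}^-(y),
\]
and then to plug it into the definition of $\Ldens{a}^{\rm s}$ and check the symmetry condition \eqref{eq:symm} by a direct manipulation. Everything else is bookkeeping.

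First I would derive the identity above. By the definition of the log-likelihood ratio in \eqref{eq:loglike}, on the set where $L(y)$ takes value $\ell$ we have $W(y\mid 1) = e^{\ell}\,W(y\mid -1)$. Since $\Ldens{a}^+$ (respectively $\Ldens{a}^-$) is by construction the law of the random variable $L(Y)$ under $W(\cdot\mid 1)$ (respectively $W(\cdot\mid -1)$), a change-of-measure argument — or equivalently integrating $W(y\mid 1) = e^{L(y)}W(y\mid -1)$ against any test function of $L(y)$ — gives $\Ldens{a}^+(\ell)=e^{\ell}\Ldens{a}^-(\ell)$ for (Lebesgue-almost) every $\ell$. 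This is the standard ``tilt by the likelihood ratio'' step and is the only place where the meaning of $L$ enters.

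Next I would verify \eqref{eq:symm}. Using the definition \eqref{eq:ldenssymm} and the identity just derived,
\begin{align*}
\Ldens{a}^{\rm s}(y) &= \tfrac{1}{2}\bigl(\Ldens{a}^-(-y) + \Ldens{a}^+(y)\bigr) = \tfrac{1}{2}\bigl(\Ldens{a}^-(-y) + e^{y}\Ldens{a}^-(y)\bigr),\\
e^{y}\Ldens{a}^{\rm s}(-y) &= \tfrac{e^{y}}{2}\bigl(\Ldens{a}^-(y) + \Ldens{a}^+(-y)\bigr) = \tfrac{e^{y}}{2}\bigl(\Ldens{a}^-(y) + e^{-y}\Ldens{a}^-(-y)\bigr).
\end{align*}
Expanding the second line yields $\tfrac{1}{2}\bigl(e^{y}\Ldens{a}^-(y) + \Ldens{a}^-(-y)\bigr)$, which coincides term-by-term with the first line. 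Hence $\Ldens{a}^{\rm s}(y)=e^{y}\Ldens{a}^{\rm s}(-y)$, which is exactly the symmetry condition.

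The only step requiring any real care is the first one: one has to be a bit careful because $L(Y)$ may have atoms or be defined only almost everywhere on $\mathcal{Y}$, and $W$ could have a continuous output alphabet so the ``density'' of $L(Y)$ must be interpreted via the push-forward measure. I would address this either by treating $\Ldens{a}^\pm$ as measures and writing $d\Ldens{a}^+/d\Ldens{a}^- (\ell)=e^{\ell}$ from the Radon--Nikodym viewpoint, or by restricting to the standard setting of \cite{RiU08} where $L(Y)$ admits a density and the identity is to be read up to a null set. After that, the symmetry check in the second paragraph is purely algebraic.
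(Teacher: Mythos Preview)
Your proof is correct and follows essentially the same approach as the paper: both arguments hinge on the tilting identity $\Ldens{a}^+(y)=e^{y}\Ldens{a}^-(y)$ (derived from the definition of $L$), and then check the symmetry condition \eqref{eq:symm} for $\Ldens{a}^{\rm s}$ by direct substitution. The paper phrases the first step via integration over the preimage $L^{-1}([y,y+\Delta y])$ and records the companion identity $\Ldens{a}^-(-y)=e^{y}\Ldens{a}^+(-y)$ before concluding, whereas you cast it as a change-of-measure and write out the algebra explicitly; these are the same argument.
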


\begin{proposition}\label{prop:cap}
Consider the transmission over a B-DMC $W$ with symmetric capacity $C_{\rm s}(W)$ and let $\Ldens{a}^+(y)$ and $\Ldens{a}^-(y)$ be the
$L$-densities assuming that $X=1$ and $X=-1$ is transmitted, respectively. Define the $L$-density of the ``symmetrized channel'' $\Ldens{a}^{\rm s}(y)$ as in \eqref{eq:ldenssymm}. Then,
\begin{equation}
C_{\rm s}(W) = \int \Ldens{a}^{\rm s}(y)\left(1-\log(1+e^{-y})\right)dy.
\end{equation}
\end{proposition}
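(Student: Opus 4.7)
The plan is to unfold the definition of symmetric capacity as mutual information with a uniform input, rewrite the integrand in terms of the log-likelihood ratio $L(y)$, and then convert the two resulting integrals (over the channel output $y$) into integrals over the $L$-densities $\Ldens{a}^+$ and $\Ldens{a}^-$.

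Concretely, I start from
\begin{equation*}
I_{\rm s}(W) = \tfrac12 \int W(y\mid 1)\log\frac{2W(y\mid 1)}{W(y\mid 1)+W(y\mid -1)}\,dy + \tfrac12 \int W(y\mid -1)\log\frac{2W(y\mid -1)}{W(y\mid 1)+W(y\mid -1)}\,dy.
\end{equation*}
Dividing numerator and denominator by $W(y\mid -1)$ in the first ratio and by $W(y\mid 1)$ in the second, and using \eqref{eq:loglike}, the two logarithms become $1-\log(1+e^{-L(y)})$ and $1-\log(1+e^{L(y)})$, respectively.

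Next I apply the change of variables $\ell = L(y)$. By definition, $\Ldens{a}^+$ is the density of $L(Y)$ when $X=1$, so $\int W(y\mid 1)f(L(y))\,dy = \int \Ldens{a}^+(\ell)f(\ell)\,d\ell$, and analogously for $\Ldens{a}^-$. This yields
\begin{equation*}
I_{\rm s}(W) = \tfrac12 \int \Ldens{a}^+(\ell)\bigl(1-\log(1+e^{-\ell})\bigr)d\ell + \tfrac12 \int \Ldens{a}^-(\ell)\bigl(1-\log(1+e^{\ell})\bigr)d\ell.
\end{equation*}
In the second integral, I substitute $\ell \mapsto -\ell$, which turns $\Ldens{a}^-(\ell)$ into $\Ldens{a}^-(-\ell)$ and $1-\log(1+e^{\ell})$ into $1-\log(1+e^{-\ell})$. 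Pulling the two integrals together and invoking the definition \eqref{eq:ldenssymm} of $\Ldens{a}^{\rm s}$ gives exactly the claimed expression.

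The calculation is essentially routine; the only place where care is needed is the change of variables $\ell \mapsto -\ell$ in the $X=-1$ integral (verifying that the orientation of the limits of integration and the absolute value of the Jacobian combine correctly so that the result really is $\Ldens{a}^-(-\ell)$ evaluated against $1-\log(1+e^{-\ell})$). Note that I do not need to appeal to Proposition \ref{prop:symm} or to the formula \eqref{eq:capacity} here: this is a direct computation from the definition of $I_{\rm s}(W)$, and it in fact provides an alternative way to see why Proposition \ref{prop:symm} is consistent with \eqref{eq:capacity} applied to the symmetrized density.
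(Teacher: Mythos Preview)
Your proof is correct and follows essentially the same route as the paper's: both start from the two-integral expansion of $I_{\rm s}(W)$, rewrite the integrands in terms of the log-likelihood ratio, perform the sign flip in the $X=-1$ integral, and combine via \eqref{eq:ldenssymm}. The only cosmetic difference is that the paper first fixes the canonical representative $W(y\mid \pm 1)=\Ldens{a}^\pm(y)$ (so that $L(y)=y$) and then invokes the identities \eqref{eq:reexp}--\eqref{eq:reexp2} from the proof of Proposition~\ref{prop:symm}, whereas you keep a generic $W$ and pass to the $L$-densities via the pushforward $\ell=L(y)$; your version is slightly more self-contained in that it does not rely on those auxiliary identities.
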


Consequently, in order to achieve the symmetric capacity $C_{\rm s}(W)$ of the (possibly asymmetric) channel $W$, it suffices to
construct a code that achieves the capacity of the symmetric channel
with $L$-density $\Ldens{a}^{\rm s}$. Indeed, the density evolution
analysis for the transmission over $W$ is exactly the same as for the
transmission over the symmetrized channel. Furthermore, the capacity
of the symmetrized channel equals the symmetric capacity $C_{\rm s}(W)$ of the original channel $W$. As a result, in order to solve
the problem, we can employ, for instance, an $(\dl, \dr)$-regular
SC-LDPC ensemble with sufficiently large degrees.

In short, the problem of achieving the symmetric capacity of any B-DMC can be solved by using codes (e.g., polar, spatially coupled) that are provably optimal for symmetric channels.

\subsection{How to Transmit Biased Bits} \label{subsec:checks}

Let us consider a generalization of the previous problem in which
the bits of the codeword are biased, i.e., they are not chosen according to a uniform
distribution. This scenario is an important primitive that will be used in Sections \ref{sec:integrated} and \ref{sec:chaining}, where we describe coding techniques that achieve the capacity of asymmetric channels.

\vspace{1em}

\noindent {\bf Problem Statement.} Let $W$ be a B-DMC with capacity-achieving input distribution $\{p^*(x)\}_{x\in \{0, 1\}}$ s.t. $p^*(1)=\alpha$ for some $\alpha \in [0, 1]$. Let $X^{1:n}$ be a sequence of $n$
i.i.d. Bernoulli$(\alpha)$ random variables. Denote by $Y^{1:n}$ the channel output when $X^{1:n}$ is transmitted. Furthermore, assume that the transmitter and the receiver are connected via a noiseless channel of capacity roughly
$nH(X\mid Y)$. Given $Y^{1:n}$, and with the help of the noiseless channel, the aim is to reconstruct $X^{1:n}$ at the receiver with high probability as $n$ goes large. 

\vspace{1em}

\noindent {\bf Design of the Scheme.} Let $P$ be a parity-check matrix with $nH(X\mid Y)$ rows and $n$ columns. Hence, $P$ represents a code of length $n$ and rate $1-H(X\mid Y)$. Let $S^{1:nH(X\mid Y)} = P X^{1:n}$ and assume that $S^{1:nH(X\mid Y)}$ is sent over the noiseless channel to the receiver. Given $Y^{1:n}$ and $S^{1:nH(X\mid Y)}$, we will prove that the receiver can reconstruct $X^{1:n}$, assuming that the code represented by $P$ is capacity-achieving under belief-propagation (BP) decoding for the symmetric channel described in the following (see eq. \eqref{eq:ldenssymmpost}). 

For the sake of simplicity, let us map the input alphabet into $\{-1, 1\}$. Since the input distribution is non-uniform, the belief-propagation algorithm needs to take into account also the prior on $X$ and it is no longer based on the log-likelihood ratio $L(y)$ defined in  \eqref{eq:loglike}. Let $L_{\rm p}(y)$ denote the log-posterior ratio, defined as 
\begin{equation}\label{eq:loglikepost}
L_{\rm p}(y) = \ln \frac{p_{X\mid Y}(1\mid y)}{p_{X\mid Y}(-1\mid y)} = L(y) + \ln \frac{\bar{\alpha}}{\alpha}.
\end{equation}
Following the lead of Section \ref{subsec:symmcap}, let us define the densities of $L_{\rm p}(Y)$ assuming that $X=1$ and $X=-1$ is transmitted and let us denote them as $\Ldens{a}_{\rm p}^+(y)$ and $\Ldens{a}_{\rm p}^-(y)$, respectively. If we flip the density associated with $X=-1$, i.e., we consider $\Ldens{a}_{\rm p}^-(-y)$, then, by the symmetry of the message-passing equations, the sign of the messages that enter or exit the variable nodes with associated transmitted value $X=-1$ is flipped as well. Therefore, the density evolution analysis for a particular codeword is equivalent to that for the all-one codeword provided that we initialize the variable nodes with associated value $1$ to $\Ldens{a}_{\rm p}^+(y)$,  and the variable nodes with value $-1$ to $\Ldens{a}_{\rm p}^-(-y)$, respectively. As ${\mathbb P}(X=-1) = \alpha$, the density evolution equations for our asymmetric setting are the same as those for transmission over the ``symmetrized channel'' with $L$-density
\begin{equation}\label{eq:ldenssymmpost}
\Ldens{a}_{\rm p}^{\rm s}(y) = \alpha \Ldens{a}_{\rm p}^-(-y) + \bar{\alpha}\Ldens{a}_{\rm p}^+(y).
\end{equation}
The propositions below show that this channel is, indeed, symmetric and establish the relation between the conditional entropy $H(X\mid Y)$ and $\Ldens{a}_{\rm p}^{\rm s}(y)$. The proofs of these results can be found in Appendix \ref{app:proofsecII2}. 

\begin{proposition}\label{prop:symmalpha}
The $L$-density $\Ldens{a}_{\rm p}^{\rm s}(y)$ given by \eqref{eq:ldenssymmpost} is symmetric.
\end{proposition}

\begin{proposition}\label{prop:capalpha}
Consider the transmission over a B-DMC $W$ with capacity-achieving input distribution $p^*$. Let $X\sim p^*$ and $Y$ be the input and the output of the channel. Denote by $\Ldens{a}_{\rm p}^+(y)$ and $\Ldens{a}_{\rm p}^-(y)$ the densities of $L_{\rm p}(Y)$ assuming that $X=1$ and $X=-1$ is transmitted. Let $\Ldens{a}_{\rm p}^{\rm s}(y)$ be the density of the ``symmetrized channel'', as in \eqref{eq:ldenssymmpost}. Then,
\begin{equation}\label{eq:condent}
H(X\mid Y) = \int \Ldens{a}_{\rm p}^{\rm s}(y)\log(1+e^{-y})dy.
\end{equation}
\end{proposition}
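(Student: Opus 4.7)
The plan is to compute $H(X\mid Y)$ by splitting it along the two possible values of $X$ and rewriting the conditional probabilities as explicit functions of the posterior log-likelihood ratio $L_{\rm p}(Y)$. Concretely, I would start from the identity $H(X\mid Y) = \mathbb{E}[-\log p_{X\mid Y}(X\mid Y)]$ and condition on $X$, obtaining
\[
H(X\mid Y) = \bar{\alpha}\,\mathbb{E}\bigl[-\log p_{X\mid Y}(1\mid Y)\mid X=1\bigr] + \alpha\,\mathbb{E}\bigl[-\log p_{X\mid Y}(-1\mid Y)\mid X=-1\bigr].
\]

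Next, I would invert the definition \eqref{eq:loglikepost} to express the two posterior masses in closed form: $p_{X\mid Y}(1\mid y)=1/(1+e^{-L_{\rm p}(y)})$ and $p_{X\mid Y}(-1\mid y)=1/(1+e^{L_{\rm p}(y)})$. Consequently, $-\log p_{X\mid Y}(1\mid y) = \log(1+e^{-L_{\rm p}(y)})$ and $-\log p_{X\mid Y}(-1\mid y) = \log(1+e^{L_{\rm p}(y)})$. By the very definition of $\Ldens{a}_{\rm p}^+$ and $\Ldens{a}_{\rm p}^-$ as the densities of $L_{\rm p}(Y)$ conditioned on $X=1$ and $X=-1$ respectively, the two conditional expectations turn into integrals against these densities, giving
\[
H(X\mid Y) = \bar{\alpha}\int \Ldens{a}_{\rm p}^+(y)\log(1+e^{-y})\,dy + \alpha\int \Ldens{a}_{\rm p}^-(y)\log(1+e^{y})\,dy.
\]

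Finally, in the second integral I would perform the change of variables $y\mapsto -y$, which converts $\Ldens{a}_{\rm p}^-(y)\log(1+e^y)\,dy$ into $\Ldens{a}_{\rm p}^-(-y)\log(1+e^{-y})\,dy$. Substituting this back and pulling out the common factor $\log(1+e^{-y})$ collapses the two integrals into a single one against $\bar{\alpha}\Ldens{a}_{\rm p}^+(y) + \alpha\Ldens{a}_{\rm p}^-(-y) = \Ldens{a}_{\rm p}^{\rm s}(y)$, which yields \eqref{eq:condent}.

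No step is really an obstacle; the argument is essentially bookkeeping. The only thing to be mindful of is the sign convention in the change of variables and the mapping between the input alphabets $\{0,1\}$ and $\{-1,1\}$, so that the weight $\alpha$ is correctly attached to $\Ldens{a}_{\rm p}^-(-y)$ and $\bar{\alpha}$ to $\Ldens{a}_{\rm p}^+(y)$ in accordance with \eqref{eq:ldenssymmpost}. It is worth noting that the derivation never uses that $p^\ast$ is the \emph{capacity-achieving} distribution: the identity holds for any $\alpha\in[0,1]$, and the assumption is only invoked here because this is the input distribution relevant to the scheme.
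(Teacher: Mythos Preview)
Your proof is correct and follows the same overall outline as the paper's: split $H(X\mid Y)$ according to the value of $X$, rewrite the log-posterior terms as functions of the posterior LLR, perform the substitution $y\mapsto -y$ in the $X=-1$ piece, and recombine into $\Ldens{a}_{\rm p}^{\rm s}$. The execution differs slightly and is arguably more direct. The paper first replaces $W$ by the canonical representative $W(y\mid \pm 1)=\Ldens{a}_{\rm p}^{\pm}(y)$, verifies that $L_{\rm p}(y)=y$ under this choice, and then simplifies each integral by invoking the identity $\bar{\alpha}\,\Ldens{a}_{\rm p}^+(y)=e^{y}\alpha\,\Ldens{a}_{\rm p}^-(y)$ established in the proof of Proposition~\ref{prop:symmalpha}. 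You sidestep both the representative-channel detour and the appeal to that identity by working directly with $L_{\rm p}(Y)$ as a random variable: since $-\log p_{X\mid Y}(\pm 1\mid y)$ is already an explicit function of $L_{\rm p}(y)$, the conditional expectations immediately become integrals against $\Ldens{a}_{\rm p}^{\pm}$ with no further simplification needed. Your closing remark that the capacity-achieving assumption on $p^\ast$ is never used in the identity is also correct.
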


Let us now see how these two propositions imply that we can reconstruct $X^{1:n}$ with high probability. Since the channel with density $\Ldens{a}_{\rm p}^{\rm s}(y)$ is symmetric by Proposition \ref{prop:symmalpha}, its capacity is given by
\begin{equation*}
\int \Ldens{a}_{\rm p}^{\rm s}(y)\left(1-\log(1+e^{-y})\right)dy = 1-\int \Ldens{a}_{\rm p}^{\rm s}(y)\log(1+e^{-y})dy = 1-H(X\mid Y),
\end{equation*}
where the last equality comes from Proposition \ref{prop:capalpha}.
Recall that the receiver is given the channel output $Y^{1:n}$ and
the error-free vector $S^{1:nH(X\mid Y)} = P X^{1:n}$. Hence, we
can think of this setting as one in which $X^{1:n}$ is a codeword
of a sparse graph code with syndrome vector $S^{1:nH(X\mid Y)}$
shared between the transmitted and the receiver. As previously
stated, the density evolution analysis for this case is the same
as when we transmit over the symmetric channel with density
$\Ldens{a}_{\rm p}^{\rm s}(y)$ and the syndrome vector is set to
0. By assumption, the matrix $P$ comes from a code that achieves
capacity for such a symmetric channel, hence the transmitted vector
$X^{1:n}$ can be reconstructed with high probability.

We can employ, for instance, an $(\dl, \dr)$-regular SC-LDPC ensemble
with sufficiently large degrees. Another option is to use
spatially coupled MacKay-Neal (MN) and Hsu-Anastasopoulos (HA) LDPC
codes that, compared to $(\dl, \dr)$-regular codes, have bounded
graph density. In particular, in \cite{KaSa11} it is proved that
MN and HA codes achieve the capacity of B-DMCs \emph{under MAP
decoding} by using a parity-check matrix with bounded column and row
weight.  Furthermore, the authors of \cite{KaSa11} give empirical
evidence of the fact that spatially coupled MN and HA codes achieve
the capacity of the BEC also {\em under iterative decoding}. These results
are extended to the additive white Gaussian noise channel in
\cite{MKLC12}.

In summary, so far we have discussed how to achieve the symmetric
capacity of a B-DMC and how to transmit biased bits. Now, let us
move to the main topic of this paper and describe three approaches
for achieving the actual capacity of any DMC. While doing so, we will regard the solutions to the two problems of this section as useful primitives.

\section{Paradigm 1: Gallager's Mapping}\label{sec:gallager}

The solution proposed by Gallager \cite[pg. 208]{Gal68} consists
of using a standard linear code and applying a non-linear mapper
to the encoded bits in such a way that the resulting input distribution
is appropriately biased. More recently, Gallager's mapping was used
in \cite{McE01} to approach the capacity of nonstandard channels
via turbo-like codes. Furthermore, in \cite{SS03allerton}, the
authors applied this shaping idea to finite-state channels and described
how to construct an explicit invertible finite-state encoder. Before
moving on to a general description of the scheme, to convey the
main ideas, let us start with an example.

\subsection{A Concrete Example}\label{subsec:ex}
Let ${\mathcal X} = \{0, 1, 2\}$ and suppose that we want to transmit
over a channel $W: \mathcal {X} \to {\mathcal Y}$ with a
capacity-achieving input distribution of the following form:
$p^*(0)=3/8$, $p^*(1)=3/8$, $p^*(2)=2/8$.

Let ${\mathcal V} = \{0, 1, \cdots, 7\}$ and
consider the function $f : {\mathcal V} \to {\mathcal X}$ that
maps three elements of ${\mathcal V}$ into $0\in {\mathcal X}$, three
other elements of ${\mathcal V}$ into $1\in {\mathcal X}$, and the remaining two elements
of ${\mathcal V}$ into $2\in {\mathcal X}$. In this way, the uniform
distribution over ${\mathcal V}$ induces the capacity achieving
distribution over ${\mathcal X}$. Define the channel $W': {\mathcal
V}\to {\mathcal Y}$ as
\begin{equation}\label{eq:mapqex}
W'(y \mid v) = W(y \mid f(v) ).
\end{equation}
Take a code that achieves the symmetric capacity of $W'$. Then, we can use this code to achieve the capacity of $W$ via the mapping $f$.

The above scheme works under the assumption that we can construct
codes that achieve the symmetric capacity for any given input
alphabet size. Note that this can be done, e.g., with $q$-ary polar
codes \cite{STA09}. Sometimes it is more convenient to achieve this
goal indirectly by using only binary codes. Indeed, suppose that
the channel changes for some reason. Then, the optimal input
distribution also changes, and we might have to change the alphabet
${\mathcal V}$. If we code directly on ${\mathcal V}$, we will also
have to change the code itself. If the code needs to be implemented
in hardware, this might not be convenient. However, if we manage
to use the same binary code and only modify some preprocessing
steps, then it is easy to accomplish any required change in the
input distribution.

Let us now describe this approach in detail. Observe that ${\mathcal V}$ has cardinality $8=2^3$. Rather than considering the set of integers from $0$ to $7$, it is more convenient to consider the set of binary triplets. Let ${\mathcal U} = \{0, 1\}^3$ and consider the function by
$g : {\mathcal U} \to {\mathcal X}$. As before, $g$ maps three elements of
${\mathcal U}$ into $0\in {\mathcal X}$, three other elements of ${\mathcal
U}$ into $1\in {\mathcal X}$, and the remaining two elements of ${\mathcal U}$ into
$2\in {\mathcal X}$. In this way, the uniform
distribution over ${\mathcal U}$ induces the capacity achieving
distribution over ${\mathcal X}$. Note that any $u\in {\mathcal U}$ can be written as $u=(u_1, u_2, u_3)$, where
$u_i \in \{0, 1\}$ for $i \in \{1, 2, 3\}$. Define the channels $W''_{1}: \{0, 1\}\to
{\mathcal Y}$, $W''_{2}: \{0, 1\}\to {\mathcal Y}\times \{0, 1\}$,
$W''_{3}: \{0, 1\}\to {\mathcal Y}\times \{0, 1\} \times \{0,
1\}$ as
\begin{equation}\label{eq:mapbinex}
\begin{split}
&W''_{1}(y \mid u_1) =\frac{1}{4} \sum_{u_2, u_3} W(y \mid g(u_1, u_2, u_3) ),\\
&W''_{2}(y, u_1 \mid u_2) =\frac{1}{4} \sum_{u_3} W(y \mid g(u_1, u_2, u_3) ),\\
&W''_{3}(y, u_1, u_2 \mid u_3) =\frac{1}{4} W(y \mid g(u_1, u_2, u_3) ).\\
\end{split}
\end{equation}
Take three binary codes that achieve the symmetric capacities of $W''_{1}$, $W''_{2}$, and $W''_{3}$. By the chain rule of mutual information, the sum of these capacities equals $C(W)$. Hence, we can use these codes to achieve the capacity of $W$ via the mapping $g$.

\subsection{Description of the General Scheme}\label{subsec:generalscheme}

\noindent {\bf Problem Statement.} Let $W$ be a DMC with capacity-achieving input distribution $\{p^*(x)\}_{x\in {\mathcal X}}$. The aim is to transmit over $W$ with rate close to $C(W)$.

\vspace{1em}

\noindent {\bf Design of the Scheme.}
Pick $\delta >0$ and find a rational approximation $\tilde{p}(x)$ that differs from $p^*(x)$ by at most $\delta$ in total variation distance. In formulae, take $\tilde{p}(x) = n_x/d$ with $n_x$, $d \in {\mathbb N}$ for all $x\in {\mathcal X}$ s.t.
\begin{equation}\label{eq:TVdist}
\frac{1}{2}\sum_{x\in {\mathcal X}}|p^*(x)-\tilde{p}(x)| < \delta.
\end{equation}
Take an extended alphabet ${\mathcal V}$ with cardinality equal to $d$ and consider the function $f : {\mathcal V} \to {\mathcal X}$ that maps $n_x$ elements of ${\mathcal V}$ into $x\in {\mathcal X}$. Define the channel $W': {\mathcal V}\to {\mathcal Y}$ as in \eqref{eq:mapqex}. Denote by $X$ and $Y$ the input and the output of the channel $W$, respectively. Let $V$ be uniformly distributed over $\mathcal V$ and set $X = f(V)$. Since the uniform distribution over ${\mathcal V}$ induces the input distribution $\tilde{p}(x)$ over $\mathcal X$, we have that $X\sim \tilde{p}(x)$. Construct a code $\mathcal C$ that achieves the symmetric capacity of $W'$. Therefore, by using the code $\mathcal C$ and the mapping $f$, we can transmit at rate $R$ arbitrarily close to 
\begin{equation*}
C_{\rm s}(W') = I(V; Y) = I(X; Y) \underset{\delta\to 0}{\longrightarrow} C(W).
\end{equation*}
As $\delta$ goes to 0, the distribution $\tilde{p}$ tends to $p^*$ and $I(X; Y)$ approaches $C(W)$.

If we want to restrict to binary codes, select a rational approximation of the form $\tilde{p}(x) = n_x/2^t$ for $t, n_x \in {\mathbb N}$. Pick ${\mathcal U} = \{0, 1\}^t$ and consider the function $g : {\mathcal U} \to {\mathcal X}$ that maps $n_x$ elements of ${\mathcal U}$ into $x\in {\mathcal X}$. The set ${\mathcal U}$ contains binary vectors of length $t$ that can be written in the form $u_{1:t}=(u_1, \cdots, u_t)^T$, where $u_j \in \{0, 1\}$ for $j \in [t]$. Define the synthetic channels $W''_{j}: \{0, 1\}\to {\mathcal Y}\times \{0, 1\}^{j-1}$, similarly to \eqref{eq:mapbinex}, i.e.,  
\begin{equation}\label{eq:defWg}
W''_{j}(y, u_{1:j-1} \mid u_j) =\frac{1}{2^{t-1}} \sum_{u_{j+1:t}} W(y \mid g(u_{1:t}) ).
\end{equation}
Let $U_{1:t}$ be a sequence of $t$ i.i.d. random variables uniform over $\{0, 1\}$. Set $X = g(U_{1:t})$. Since the uniform distribution over ${\mathcal U}$ induces the input distribution $\tilde{p}(x)$ on $\mathcal X$, we have that $X\sim \tilde{p}(x)$. Construct $t$ codes ${\mathcal C}_{1}, \cdots, {\mathcal C}_{t}$ s.t. ${\mathcal C}_{j}$ has rate $R_j$ that is arbitrarily close to the symmetric capacity of the channel $W''_{j}$. Therefore, by using these codes and the mapping $g$, we can transmit at rate $R$ arbitrarily close to
\begin{equation}\label{eq:chainrulegall}
\sum_{j = 1}^t I(U_j; Y \mid U_{1:j-1}) = I (X; Y) \underset{t\to\infty}{\longrightarrow} C(W),
\end{equation}
where the first equality comes from the chain rule and $I(X;Y)$ approaches $C(W)$ as $\delta$ goes to 0.

Another way to interpret this problem is from the perspective of coding for the $t$-user multiple access channel (MAC), see Section V-A of \cite{AbT12}. The paper \cite{AbT12} focuses on polar codes, but the same considerations apply to any coding scheme that achieves the uniform sum-rate of a $t$-user MAC. Indeed, let $W''$ be a DMC whose input alphabet has cardinality $2^t$ and consider the channel $\widetilde{W}''$ obtained by mapping a uniform symbol on the input alphabet of $W''$ into $t$ independent uniform bits. Then, $\widetilde{W}''$ is a $t$-user binary-input MAC and its uniform sum-rate equals the symmetric capacity of $W''$. Thus, in order to achieve the symmetric capacity of $W''$ (hence, the capacity of the original channel $W$), we can use a code for the MAC $\widetilde{W}''$.     

Let us now explain formally how the encoding and decoding operations are done for the schemes mentioned above (see also Figure \ref{fig:gallscheme}). Then, we will consider the performance of this approach by relating the gap $C(W)-I(X;Y)$ to $\delta$ and to the cardinalities of the input and the output alphabets.

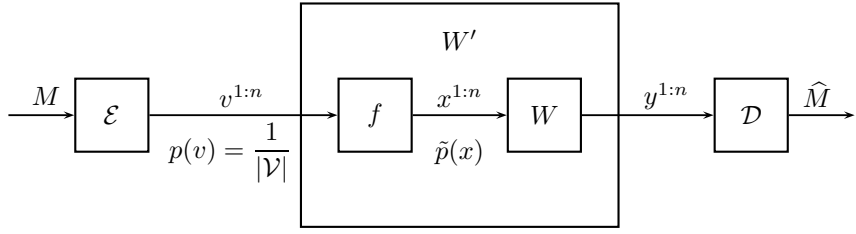
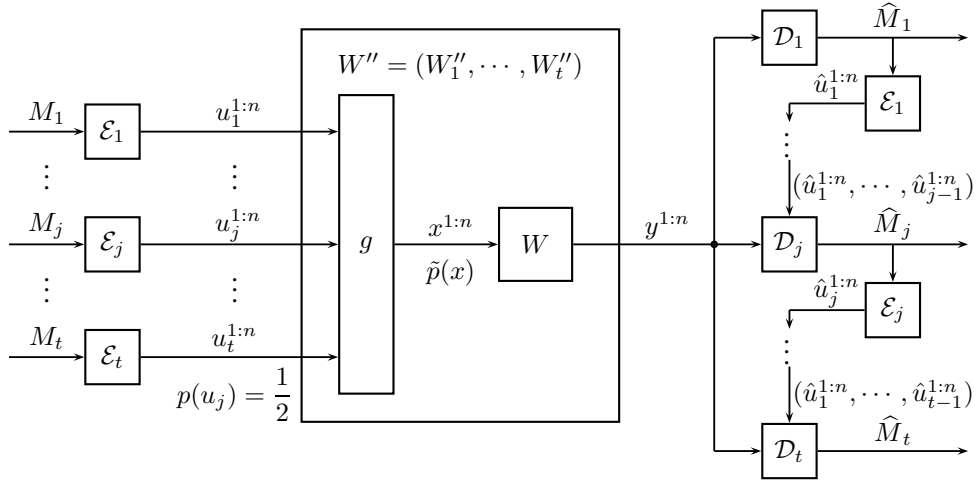
\begin{figure}[t]
    \centering
    \subfigure[Solution based on a single non-binary code: the message $M$ is encoded by $\mathcal E$ and decoded by $\mathcal D$.]{
\psset{arrowscale=1}
\psset{unit=0.5cm}
\psset{xunit=1,yunit=1}
\begin{pspicture}(-6,0.5)(19,7)
\psline[linecolor=black,linewidth=0.7pt]{->}(-1.75,4)(3.25,4)
\psline[linecolor=black,linewidth=0.7pt]{->}(-5.5,4)(-3.75,4)
\psframe(2.25,1)(10.75,7)
\psframe(3.25,3)(5.25,5)
\psframe(-3.75,3)(-1.75,5)
\rput(-2.75,4){\small{$\mathcal E$}}
\psframe(13.25,3)(15.25,5)
\rput(14.25,4){\small{$\mathcal D$}}
\psline[linecolor=black,linewidth=0.7pt]{->}(5.25,4)(7.75,4)
\psline[linecolor=black,linewidth=0.7pt]{->}(15.25,4)(17,4)
\rput(4.25,4){\small{$f$}}
\psframe(7.75,3)(9.75,5)
\rput(8.75,4){\small{$W$}}
\psline[linecolor=black,linewidth=0.7pt]{->}(9.75,4)(13.25,4)
\rput[c](0.7,4.5){\small{$v^{1:n}$}}
\rput[c](-4.5,4.5){\small{$M$}}
\rput[c](0.4,3){\small{$p(v) = \displaystyle\frac{1}{|{\mathcal V}|}$}}
\rput[c](6.5,3){\small{$\tilde{p}(x)$}}
\rput[c](6.5,4.5){\small{$x^{1:n}$}}
\rput[c](12,4.5){\small{$y^{1:n}$}}
\rput[c](16,4.5){\small{$\fixwidehat{M}$}}
\rput[c](6.5,6){\small{$W'$}}
\end{pspicture}
}
    \subfigure[Solution based on $t$ binary codes: the message $M=(M_1, \cdots, M_t)$ is encoded by ${\mathcal E}_1, \cdots, {\mathcal E}_t$ and decoded successively by ${\mathcal D}_1, \cdots, {\mathcal D}_t$. Note that ${\mathcal D}_j$ is the decoder of the synthetic channel $W''_{j}$ and it is fed with the output of the actual channel $W$ together with the previous re-encoded estimates.]{
\psset{arrowscale=1}
\psset{unit=0.5cm}
\psset{xunit=1,yunit=1}
\begin{pspicture}(-10,-5)(26,8)
\psline[linecolor=black,linewidth=0.7pt]{->}(-2,4.75)(3.25,4.75)
\psline[linecolor=black,linewidth=0.7pt]{->}(-2,1.75)(3.25,1.75)
\psline[linecolor=black,linewidth=0.7pt]{->}(-2,-1.25)(3.25,-1.25)

\psline[linecolor=black,linewidth=0.7pt]{->}(-5.5,4.75)(-3.5,4.75)
\psline[linecolor=black,linewidth=0.7pt]{->}(-5.5,1.75)(-3.5,1.75)
\psline[linecolor=black,linewidth=0.7pt]{->}(-5.5,-1.25)(-3.5,-1.25)

\rput[c](-4.5,5.25){\small{$M_1$}}
\rput[c](-4.5,2.25){\small{$M_j$}}
\rput[c](-4.5,-0.75){\small{$M_t$}}

\psframe(-3.5,4)(-2,5.5)
\rput(-2.75,4.75){\small{${\mathcal E}_1$}}

\psframe(-3.5,1)(-2,2.5)
\rput(-2.75,1.75){\small{${\mathcal E}_j$}}

\psframe(-3.5,-2)(-2,-0.5)
\rput(-2.75,-1.25){\small{${\mathcal E}_t$}}

\psframe(2.25,-3)(10.75,7.5)
\rput(0.5,3.75){$\vdots$}
\rput(0.5,0.75){$\vdots$}
\rput(-4.5,3.75){$\vdots$}
\rput(-4.5,0.75){$\vdots$}
\rput[c](0.6,5.25){\small{$u_1^{1:n}$}}
\rput[c](0.6,2.25){\small{$u_j^{1:n}$}}
\rput[c](0.5,-0.75){\small{$u_t^{1:n}$}}

\rput[c](0.5,-2.25){\small{$p(u_j) = \displaystyle\frac{1}{2}$}}

\rput[c](6.25,2.25){\small{$x^{1:n}$}}
\rput[c](6.25,1){\small{$\tilde{p}(x)$}}

\psframe(3.25,-2.25)(4.75,5.75)
\rput[c](4,1.75){\small{$g$}}
\psline[linecolor=black,linewidth=0.7pt]{->}(4.75,1.75)(7.5,1.75)
\psframe(7.5,0.75)(9.5,2.75)
\rput(8.5,1.75){\small{$W$}}

\pscircle(13.25, 1.75){0.1}

\psline[linecolor=black,linewidth=0.7pt]{->}(9.5,1.75)(14.5,1.75)
\rput[c](12,2.25){\small{$y^{1:n}$}}

\psline[linecolor=black,linewidth=0.7pt]{->}(16,7.25)(20,7.25)
\rput(18,7.75){\small{$\fixwidehat{M}_1$}}

\psline[linecolor=black,linewidth=0.7pt]{->}(16,1.75)(20,1.75)
\rput(18,2.25){\small{$\fixwidehat{M}_j$}}

\psline[linecolor=black,linewidth=0.7pt]{->}(16,-3.75)(20,-3.75)
\rput(18,-3.25){\small{$\fixwidehat{M}_t$}}

\psframe(14.5,6.5)(16,8)
\rput(15.25,7.25){\small{${\mathcal D}_1$}}

\psframe(14.5,1)(16,2.5)
\rput(15.25,1.75){\small{${\mathcal D}_j$}}

\psframe(14.5,-4.5)(16,-3)
\rput(15.25,-3.75){\small{${\mathcal D}_t$}}

\psline[linecolor=black,linewidth=0.7pt]{-}(13.25,7.25)(13.25,-3.75)
\psline[linecolor=black,linewidth=0.7pt]{->}(13.25,7.25)(14.5,7.25)
\psline[linecolor=black,linewidth=0.7pt]{->}(13.25,-3.75)(14.5,-3.75)

\psline[linecolor=black,linewidth=0.7pt]{->}(18,7.25)(18,6.25)
\psframe(17.25,6.25)(18.75,4.75)
\rput(18,5.5){\small{${\mathcal E}_1$}}
\psline[linecolor=black,linewidth=0.7pt]{-}(17.25,5.5)(15.25,5.5)
\rput(16.5,6){\small{$\hat{u}_1^{1:n}$}}
\psline[linecolor=black,linewidth=0.7pt]{->}(15.25,5.5)(15.25,5)
\rput(15.175,4.6){$\vdots$}
\psline[linecolor=black,linewidth=0.7pt]{->}(15.25,4)(15.25,2.5)
\rput(17.75,3.25){\small{$(\hat{u}_1^{1:n}, \cdots, \hat{u}_{j-1}^{1:n})$}}

\psline[linecolor=black,linewidth=0.7pt]{->}(18,1.75)(18,0.75)
\psframe(17.25,0.75)(18.75,-0.75)
\rput(18,0){\small{${\mathcal E}_j$}}
\psline[linecolor=black,linewidth=0.7pt]{-}(17.25,0)(15.25,0)
\rput(16.5,0.5){\small{$\hat{u}_j^{1:n}$}}
\psline[linecolor=black,linewidth=0.7pt]{->}(15.25,0)(15.25,-0.5)
\rput(15.175,-0.9){$\vdots$}
\psline[linecolor=black,linewidth=0.7pt]{->}(15.25,-1.5)(15.25,-3)
\rput(17.75,-2.25){\small{$(\hat{u}_1^{1:n}, \cdots, \hat{u}_{t-1}^{1:n})$}}

\rput[c](6.5,6.5){\small{$W'' = (W_{1}'', \cdots, W_{t}'')$}}
\end{pspicture}
}

\caption{Coding over asymmetric channels via Gallager's mapping.}
\label{fig:gallscheme}
\end{figure}

\vspace{1em}

\noindent {\bf Encoding.} First, consider the scheme based on a single non-binary code. Let $M$ be the information message that can be thought of as a binary string of length $nR$ and let $\mathcal E$ be the encoder of the code $\mathcal C$. The output of the encoder is $v^{1:n} = (v^{(1)}, \cdots, v^{(n)})^T$, where $v^{(i)} \in {\mathcal V}$ for $i \in \{1, \cdots, n\}$. Then, $v^{1:n}$ is mapped component-wise by the function $f$ into $x^{1:n} = (x^{(1)}, \cdots, x^{(n)})^T$, with $x^{(i)} \in {\mathcal X}$ s.t. $x^{(i)} = f(v^{(i)})$. 

Second, consider the scheme based on $t$ binary codes. Let $M = (M_1, \cdots, M_t)$ be the information message divided into $t$ parts so that $M_j$ can be thought of as a binary string of length $nR_j$ for $j \in \{1, \cdots, t\}$. Let ${\mathcal E}_j$ be the encoder of the code ${\mathcal C}_j$ that maps $M_j$ into $u_j^{1:n} = (u_j^{(1)}, \cdots, u_j^{(n)})^T$, where $u_j^{(i)}\in \{0, 1\}$ for $i\in\{1, \cdots, n\}$. Then, $u_{1:t}^{1:n}$ is mapped component-wise by the function $g$ into $x^{1:n} = (x^{(1)}, \cdots, x^{(n)})^T$, with $x^{(i)} \in {\mathcal X}$ s.t. $x^{(i)} = g(u_{1:t}^{(i)})$.

Finally, we transmit the sequence $x^{1:n}$ over the channel $W$.

\vspace{1em}

\noindent {\bf Decoding.} First, consider the scheme based on a single non-binary code. Let $\mathcal D$ be the decoder of the code $\mathcal C$, that accepts as input the channel output $y^{1:n}$ and outputs the estimate $\fixwidehat{M}$.

Second, consider the scheme based on $t$ binary codes. Let ${\mathcal D}_j$ be the decoder of the code ${\mathcal C}_j$. It accepts as input the channel output $y^{1:n}$ and the previous re-encoded estimates $(\fixwidehat{u}_1^{1:n}, \cdots, \fixwidehat{u}_{j-1}^{1:n})$. It outputs the current estimate $\fixwidehat{M}_j$. To make the use of the previous estimates possible, the decoding occurs successively, i.e., the decoders ${\mathcal D}_1, \cdots, {\mathcal D}_t$ are activated in series. 

The situation is schematically represented in Figure \ref{fig:gallscheme}.

\vspace{1em}

\noindent {\bf Performance.} The codes $\mathcal C$ and ${\mathcal C}_j$ can be used to transmit reliably at rates $R$ and $R_j$. Then, $\fixwidehat{M} = M$ and $\fixwidehat{M}_j = M_j$ ($j \in \{1, \cdots t\}$) with high probability. As a result, we can transmit over $W$ with rate close to $I(X;Y)$, where the
input distribution is $\tilde{p}(x)$. Also, since the mutual information
is a continuous function of the input distribution, if $\delta$ gets
small, then $I(X;Y)$ approaches $C(W)$. This statement is
made precise by the following proposition that is proved in
Appendix~\ref{app:boundmi}.

\begin{proposition}\label{prop:boundmi}
Consider the transmission over the channel $W: {\mathcal X} \to {\mathcal
Y}$ and let $I(p)$ be the mutual information between the input and
the output of the channel when  the input distribution is $p$. Let
$p$ and $p^*$ be input distributions such that their total variation distance
is upper bounded by $\delta$, as in \eqref{eq:TVdist}, for $\delta \in (0, 1/8)$. Then,
\begin{equation}\label{eq:boundY}
|I(p^*)-I(p)| < 3\delta \log |{\mathcal Y}| +h_2(\delta),
\end{equation}
\begin{equation}\label{eq:boundX}
|I(p^*)-I(p)| < 7\delta \log |{\mathcal X}| +h_2(\delta) + h_2(4\delta).
\end{equation}
\end{proposition}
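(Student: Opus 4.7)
The plan is to exploit two complementary decompositions of the mutual information, $I(p) = H(Y) - H(Y\mid X)$ and $I(p) = H(X) - H(X\mid Y)$, and to control each of the resulting entropy differences with a Fannes-type continuity inequality. Throughout, I use the elementary fact that if $p$ and $p^*$ are input distributions with $d_{\mathrm{TV}}(p, p^*) < \delta$, then the induced output distributions satisfy $d_{\mathrm{TV}}(p_Y, p_Y^*) < \delta$ and the induced joint distributions satisfy $d_{\mathrm{TV}}(p_{X,Y}, p^*_{X,Y}) < \delta$, both of which follow from the fact that applying a channel is a contraction for total variation.

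For \eqref{eq:boundY}, I first apply Fannes' inequality in the alphabet $\mathcal{Y}$ to obtain $|H(Y) - H(Y^*)| < \delta \log|\mathcal{Y}| + h_2(\delta)$. Since the conditional entropy $H(Y\mid X) = \sum_x p(x) H(W(\cdot\mid x))$ is linear in $p$, I can write
\begin{equation*}
\bigl| H_p(Y\mid X) - H_{p^*}(Y\mid X) \bigr| \le \sum_x |p(x) - p^*(x)|\, H(W(\cdot\mid x)) \le 2\delta \log|\mathcal{Y}|.
\end{equation*}
The triangle inequality then delivers \eqref{eq:boundY}.

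For \eqref{eq:boundX} I use the dual decomposition, whose marginal entropy $|H(X) - H(X^*)|$ is controlled by Fannes' inequality in $\mathcal{X}$, giving $|H(X) - H(X^*)| < \delta \log|\mathcal{X}| + h_2(\delta)$. The delicate step is to bound $|H_p(X\mid Y) - H_{p^*}(X\mid Y^*)|$ without introducing $\log|\mathcal{Y}|$, since the output alphabet can be arbitrarily large. To that end, I split
\begin{align*}
H_{p^*}(X\mid Y^*) - H_p(X\mid Y)
&= \sum_y \bigl(p_Y^*(y) - p_Y(y)\bigr)\, H_{p^*}(X\mid Y^* = y) \\
&\quad + \sum_y p_Y(y)\, \bigl(H_{p^*}(X\mid Y^* = y) - H_p(X\mid Y = y)\bigr).
\end{align*}
The first sum is bounded by $2\delta \log|\mathcal{X}|$ using $H(X\mid Y=y) \le \log|\mathcal{X}|$ together with $d_{\mathrm{TV}}(p_Y, p_Y^*) < \delta$. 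For the second sum I apply Fannes pointwise in $y$ in the input alphabet $\mathcal{X}$, which introduces the average posterior total variation distance $\bar\delta := \sum_y p_Y(y)\, d_{\mathrm{TV}}(p(\cdot\mid y), p^*(\cdot\mid y))$. Writing $p(x,y)/p_Y(y) - p^*(x,y)/p^*_Y(y)$ and using Bayes' rule, the triangle inequality, and the joint TV bound, one checks that $\bar\delta$ is at most a small constant multiple of $\delta$. Concavity of $h_2$ on $[0, 1/2]$ (which is the role of the hypothesis $\delta < 1/8$) then lets me pull the averaging through $h_2$, producing the term $h_2(4\delta)$. Collecting all contributions yields \eqref{eq:boundX}.

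The main obstacle is the second bound. In the first decomposition, the channel acts as a data-processing map and \emph{hides} the input dependence inside the output total variation distance; in the second, one is forced to invert the channel through Bayes' rule, which can amplify total variation distance at outputs of low likelihood. The average $\bar\delta$ must therefore be bounded carefully, and the restriction $\delta < 1/8$ together with the slightly loose constants $7$ and $4$ in \eqref{eq:boundX} come from chaining Fannes' inequality and the triangle inequality in a regime where $h_2$ remains concave and monotone.
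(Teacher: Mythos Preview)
Your proposal is correct and follows essentially the same route as the paper: the same two decompositions of $I(p)$, the same Fannes-type continuity lemma, the same split of $|H(X\mid Y) - H(X^*\mid Y^*)|$ into a marginal-shift term (bounded by $2\delta\log|\mathcal X|$) and a posterior-shift term handled pointwise by Fannes, and the same Jensen argument on $h_2$ applied to the averaged posterior total variation distance $\bar\delta < 4\delta$. One minor clarification: the hypothesis $\delta < 1/8$ is needed for the \emph{monotonicity} of $h_2$ on $[0,1/2]$ (so that $\bar\delta < 4\delta < 1/2$ gives $h_2(\bar\delta) < h_2(4\delta)$), not for its concavity, which already holds on all of $[0,1]$ and is what lets you pull the average through.
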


Note that the bounds \eqref{eq:boundY} and \eqref{eq:boundX} depend separately on the input and the output alphabet. Therefore, we can conclude that, under the hypotheses of Proposition \ref{prop:boundmi}, 
\begin{equation*}
|I(p^*)-I(p)| = O\left(\delta\log\left(\frac{\min(|{\mathcal X}|, |{\mathcal Y}|)}{\delta}\right)\right).
\end{equation*}

\section{Paradigm 2: Integrated Scheme} \label{sec:integrated}

The basic idea of this approach is to use a coding scheme that is
simultaneously good for lossless source coding and for channel
coding. The \emph{source coding} part is needed to create a biased
input distribution from uniform bits, whereas the \emph{channel coding} part provides reliability for the transmission over the channel.

A provably capacity-achieving scheme was first proposed in
\cite{HY13} in the context of polar codes (see also Chapter 3 of \cite{Hon13}). Such a scheme is reviewed in Section \ref{subsec:polar} and the interested reader is referred to \cite{MHSU14} for a more detailed illustration. Then, in Section \ref{subsec:sparse} we describe how to extend the idea to sparse graph codes. A similar scheme was proposed in \cite{MuMi10} under MAP decoding. On the contrary, our approach employs a low-complexity belief-propagation algorithm.

\vspace{1em}

\noindent {\bf Problem Statement.} Let $W$ be a B-DMC with capacity-achieving input distribution $\{p^*(x)\}_{x\in \{0, 1\}}$ s.t. $p^*(1)=\alpha$ for some $\alpha \in [0, 1]$. The aim is to transmit over $W$ with rate close to $C(W)$.

\subsection{Polar Codes -- \cite{HY13}} \label{subsec:polar}

\noindent {\bf Design of the Scheme.}
Let $(U^{1:n})^T = (X^{1:n})^T G_n$, where $X^{1:n}$ is a column vector of $n$
i.i.d. components drawn according to the capacity-achieving input distribution $p^*$.

Let us start with the \emph{source coding} part of the scheme. Consider the sets
$\hset_X$ and $\lset_X$ defined as follows: for $i \in \hset_X$, $U^{(i)}$ is approximately uniformly
distributed and independent of $U^{1:i-1}$; for $i \in \lset_X$,
$U^{(i)}$ is approximately a deterministic function of the past
$U^{1:i-1}$. In formulae,
\begin{equation} \label{lossless_sets}
\begin{split}
\hset_X &= \{i \in [n] : Z(U^{(i)} \mid  U^{1:i-1}) \ge 1- \delta_n\}, \\
\lset_X &= \{i \in [n] : Z(U^{(i)} \mid  U^{1:i-1}) \le \delta_n\},
\end{split}
\end{equation}
where $\delta_n= 2^{-n^{\beta}}$ for $\beta \in (0, 1/2)$ and $Z(\cdot \mid \cdot)$ denotes the Bhattacharyya parameter. More formally, given
$(T, V)\sim p_{T, V}$, where $T$ is binary and $V$ takes values in
an arbitrary discrete alphabet ${\mathcal V}$, we define
\begin{equation} \label{eq:Bhattacharyya}
Z(T\mid V) = 2 \sum_{v\in{\mathcal V}}
{\mathbb P}_V(v)\sqrt{{\mathbb P}_{T\mid V}(0\mid v){\mathbb P}_{T\mid V}(1\mid v)}.
\end{equation}
It can be shown that the Bhattacharyya parameter $Z(T\mid V)$ is close to 0 or 1 if and only if the conditional entropy $H(T\mid V)$ is 
close to 0 or 1 (see Proposition 2 of \cite{Ar10}). Consequently, if $Z(T\mid V)$ is close to zero, then $T$ is approximately a deterministic function of $V$, and, 
if $Z(T\mid V)$ is close to $1$, then $T$ is approximately uniformly
distributed and independent of $V$.
Furthermore, 
\begin{equation} \label{eq:card}
\begin{split}
\lim_{n\to \infty} \frac{1}{n} \, | \hset_X|  &= H(X),\\
\lim_{n\to \infty}\frac{1}{n} \, | \lset_X|  &= 1-H(X).\\
\end{split}
\end{equation}
This means that with high probability either $U^{(i)}$ is independent from $U^{1:i-1}$ or is a deterministic function of it (for a detailed proof of polarization, see \cite{Ari09}).

Let $\lset_X^{\rm c}$ be the complement of $\lset_X$, and note that $\lim_{n\to \infty} | \lset_X^{\rm c}|/n  = H(X)$. Furthermore, given $\{U^{(i)}\}_{i \in \lset_X^{\rm c}}$, we can recover the whole vector $U^{1:n}$ in a successive manner, since $U^{(i)}$ is approximately a deterministic function of $U^{1:i-1}$ for $i \in \lset_X$. Consequently, we can compress $X^{1:n}$ into the sequence $\{U^{(i)}\}_{i \in \lset_X^{\rm c}}$ that has size roughly $nH(X)$. 

Now, let us consider the \emph{channel coding} part of the scheme. Assume that the channel output $Y^{1:n}$ is given, and interpret
this as side information for $X^{1:n}$. Consider the sets $\hset_{X
\mid Y}$ and $\lset_{X \mid Y}$ that contain, respectively, the positions $i \in [n]$ in which
$U^{(i)}$ is approximately uniform and independent of
$(U^{1:i-1}, Y^{1:n})$ and approximately a deterministic function
of $(U^{1:i-1}, Y^{1:n})$. In formulae,
\begin{equation}\label{eq:cardint1}
\begin{split}
\hset_{X\mid Y} &= \{i \in [n] : Z(U^{(i)} \mid U^{1:i-1}, Y^{1:n}) \ge 1- \delta_n\},\\
\lset_{X\mid Y} &= \{i \in
[n] : Z(U^{(i)} \mid  U^{1:i-1}, Y^{1:n}) \le \delta_n\}.\\
\end{split}
\end{equation}
Analogously to \eqref{eq:card}, we have that
\begin{equation}\label{eq:cardint2}
\begin{split}
\lim_{n\to \infty} \frac{1}{n} \, | \hset_{X\mid Y}|  &= H(X\mid Y),\\
\lim_{n\to \infty}\frac{1}{n} \, | \lset_{X\mid Y}|  &= 1-H(X\mid Y).\\
\end{split}
\end{equation}
Furthermore, analogously to the case above, given the channel output $Y^{1:n}$ and the sequence $\{U^{(i)}\}_{i \in \lset_{X\mid Y}^{\rm c}}$, we can recover the whole vector $U^{1:n}$ in a successive manner.

To construct a polar code for the channel $W$, we proceed as
follows. The information is placed in the positions indexed by $\iset
= \hset_X \cap \lset_{X\mid Y}$. By using \eqref{eq:cardint1}, \eqref{eq:cardint2} and that $\hset_{X \mid Y} \subseteq \hset_X$, it follows that
\begin{equation}\label{eq:I}
\lim_{n\to\infty}\frac{1}{n} \, | \iset| = H(X)-H(X \mid Y) = I(X;Y).
\end{equation}
Hence, our requirement on the transmission rate is met.

The remaining positions are frozen. More precisely, they are divided
into two subsets, namely $\fset_{\rm r} = \hset_X \cap \lset_{X \mid
Y}^{\text{c}}$ and $\fset_{\rm d} = \hset_X^{\text{c}}$.  For $i\in \fset_{\rm r}$, $U^{(i)}$ is
independent of $U^{1:i-1}$, but it cannot be reliably decoded using
$Y^{1:n}$. Hence, these positions are filled with bits chosen uniformly
at random, and this randomness is assumed to be shared between the
transmitter and the receiver. For $i \in \fset_{\rm d}$,
the value of $U^{(i)}$ has to be chosen in a particular way. Indeed, almost all these positions are in $\lset_X$, hence $U^{(i)}$ is approximately
a deterministic function of $U^{1:i-1}$. Below, we discuss in detail how to choose the values associated with the positions in $\fset_{\rm d}$. The situation is schematically
represented in Figure~\ref{fig:assymetric}.

\begin{figure}[tb]
\begin{center} \includegraphics[width=6.5cm]{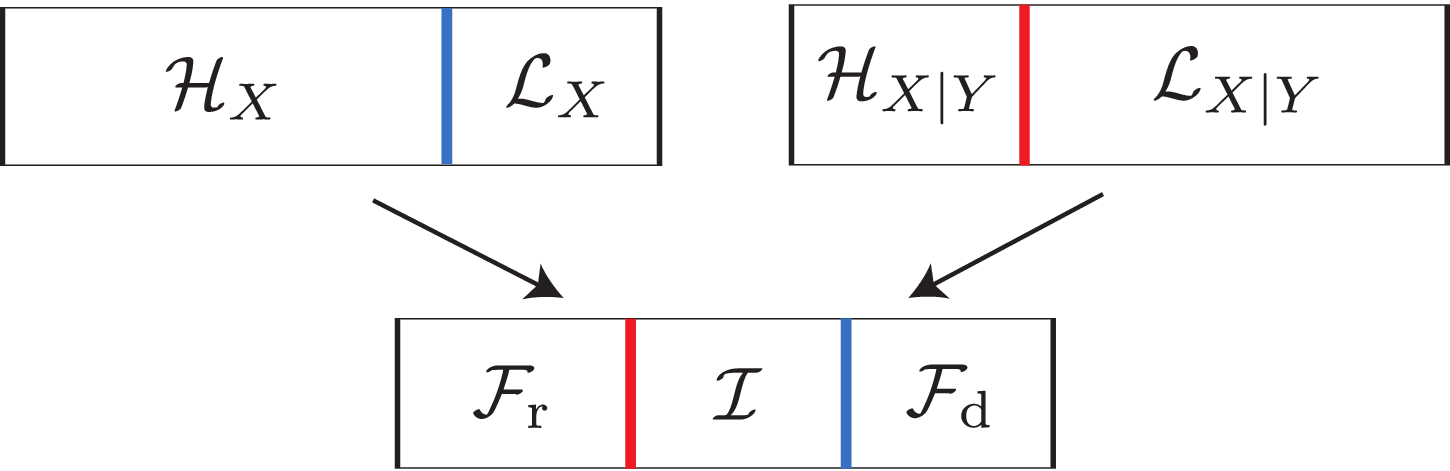}
\end{center}
\caption{Coding over asymmetric channels via an integrated scheme based on polar codes. The top left rectangle represents the subdivision of indices that yields the source coding part of the scheme. The top right rectangle represents the subdivision of indices that yields the channel coding part of the scheme. As a result, the set of
indices $[n]$ can be partitioned into three subsets (bottom image):
the information indices $\mathcal{I} = \hset_X \cap \lset_{X\mid Y}$; the
frozen indices $\mathcal{F}_{\rm r} = \hset_X \cap \lset_{X \mid Y}^{\text{c}}$
filled with uniformly random bits; and the frozen indices
$\mathcal{F}_{\rm d} = \hset_X^{\text{c}}$ chosen according to either a ``randomized rounding'' rule or an ``argmax'' rule.}
\label{fig:assymetric} \end{figure}

\vspace{1em}

\noindent {\bf Encoding.} We place the information into the positions indexed by $\iset$, hence let $\{u^{(i)}\}_{i \in \iset}$ denote the bits to be transmitted. Then, we fill the positions indexed by $\fset_{\rm r}$ with a random sequence that is shared between the transmitter
and the receiver, hence let $\{u^{(i)}\}_{i \in \fset_{\rm r}}$ be the particular realization of this sequence. 

Let us now consider the encoding of the positions in $\fset_{\rm d}$. An analogous problem was first addressed in \cite[Section III]{KoU09}, where polar codes were used for lossy source coding. The same problem also occurs when constructing polar codes for the broadcast channel \cite{GAG13ar, MHSU14}. There are at least two possible approaches for solving this issue.

On the one hand, we can use a ``randomized rounding'' rule, that consists of setting the value of the $i$-th bit according to the distribution ${\mathbb P}_{U^{(i)} \mid U^{1:i-1}}$. In formulae, 
\begin{equation}\label{eq:randround}
u^{(i)} = \left\{ \begin{array}{ll} 0, & \mbox{ w.p. } {\mathbb P}_{U^{(i)} \mid U^{1:i-1}}(0 \mid  u^{1:i-1}) \\ 1, & \mbox{ w.p. } {\mathbb P}_{U^{(i)} \mid U^{1:i-1}}(1 \mid  u^{1:i-1})\end{array} \right.
\end{equation}
The random number generator used to construct this sequence is shared between the transmitter and the receiver. The ``randomized rounding'' rule yields a provable result (see \cite{HY13} for asymmetric channels and \cite{GAG13ar} for broadcast channels). 

On the other hand, we can use an ``argmax'' rule, that consists of setting the $i$-th bit to the value that maximizes ${\mathbb P}_{U^{(i)} \mid U^{1:i-1}}$. In formulae, 
\begin{equation}
u^{(i)} = \arg\max_{u\in \{0, 1\}} {\mathbb P}_{U^{(i)} \mid U^{1:i-1}}(u \mid  u^{1:i-1}).
\end{equation}
The ``argmax'' rule seems to perform slightly better in numerical simulations, and in the recent work \cite{CB15} some progress has been made concerning its theoretical analysis. In particular, it has been proved that setting the low-entropy indices, i.e., the indices in $\lset_X$, according to the ``argmax'' rule, while still keeping a ``randomized rounding'' rule for the indices in $\hset_X^{\rm c}\setminus \lset_X$, yields a provable result. Note further that $|\hset_X^{\rm c}\setminus \lset_X|=o(n)$, i.e., the size of this last set is sublinear in $n$.

Eventually, the elements of $\{u^{(i)}\}_{i \in \fset_{\rm d}}$ are computed in successive order either using a ``randomized rounding'' or an ``argmax'' rule, and the probabilities ${\mathbb P}_{U^{(i)} \mid U^{1:i-1}}(u \mid  u^{1:i-1})$ can be obtained recursively with complexity $\Theta(n \log n)$. Since $G_n = G_n^{(-1)}$, the vector $(x^{1:n})^T = (u^{1:n})^T G_n$ is transmitted
over the channel.

\vspace{1em}

\noindent {\bf Decoding.} The decoder receives $y^{1:n}$ and computes the
estimate $\hat{u}^{1:n}$ of $u^{1:n}$ according to the rule
\begin{equation}\label{eq:decruleas}
\hat{u}^{(i)} = \left\{ \begin{array}{ll}
u^{(i)}, & \mbox{if } i \in \fset_{\rm r} \\
\displaystyle\arg\max_{u\in \{0, 1\}}
{\mathbb P}_{U^{(i)} \mid U^{1:i-1}}(u \mid  u^{1:i-1}),&
\mbox{if } i \in \fset_{\rm d} \\
\displaystyle\arg\max_{u\in \{0, 1\}} {\mathbb P}_{U^{(i)}
\mid U^{1:i-1}, Y^{1:n}}(u \mid u^{1:i-1}, y^{1:n}), & 
\mbox{if } i \in \iset
\\ \end{array}\right.  \end{equation} where ${\mathbb P}_{U^{(i)} \mid
U^{1:i-1}, Y^{1:n}}(u \mid u^{1:i-1}, y^{1:n})$ can be computed recursively with complexity $\Theta(n \log n)$. In \eqref{eq:decruleas}, we assume that the ``argmax'' rule is used to encode the positions in $\fset_{\rm d}$. If the ``randomized rounding'' rule is used, then the decoder can still correctly recover $u^{(i)}$, since the random sequence used in \eqref{eq:randround} is shared between the transmitter and the receiver.  

\vspace{1em}

\noindent {\bf Performance.} The block error probability $P_{\rm e}$ can be upper bounded by
\begin{equation}\label{eq:perfop}
P_{\rm e} \stackrel{\mathclap{\mbox{\footnotesize(a)}}}{\le} \sum_{i \in \iset} Z(U^{(i)} \mid U^{1:i-1}, Y^{1:n}) \stackrel{\mathclap{\mbox{\footnotesize(b)}}}{=} O(2^{-n^{\beta}}),
\quad \forall \, \beta \in (0, 1/2).
\end{equation}
Let us briefly comment on how to obtain formula \eqref{eq:perfop}. The inequality (a) comes from the union bound: the error probability under SC decoding is upper bounded by the sum of the probabilities of making a mistake while decoding each of the information bits (the remaining bits are frozen, therefore they are known at the decoder). Furthermore, the probability of decoding incorrectly the $i$-th synthetic channel ($i\in \mathcal I$) is upper bounded by its Bhattacharyya parameter, namely, $Z(U^{(i)} \mid U^{1:i-1}, Y^{1:n})$. For a full proof of this formula, see also the proof of Theorem 3 in \cite{HY13}. Finally, the equality (b) comes from the definition of the set $\mathcal I$, that contains positions $i$ s.t. $Z(U^{(i)} \mid U^{1:i-1}, Y^{1:n})$ is small enough.

\subsection{Sparse Graph Codes}\label{subsec:sparse}

\noindent {\bf Design of the Scheme.} Consider a linear code with parity-check
matrix $P$ with $nH(X)=nh_2(\alpha)$ rows and $n$ columns, namely
$P \in {\mathbb F}_2^{nh_2(\alpha)\times n}$. Let $X^{1:n} \in {\mathbb F}_2^{n}$
be a codeword and denote by $Y^{1:n}$ the corresponding channel output. Let $S^{1:nh_2(\alpha)}
\in {\mathbb F}_2^{nh_2(\alpha)}$ be the vector of syndromes defined as $S^{1:nh_2(\alpha)} =
P X^{1:n}$.

Recall that, in the integrated scheme, we need to achieve the source
coding and the channel coding part at the same time. To do so, we divide $S^{1:nh_2(\alpha)}$ into two parts, i.e.,
\begin{equation}\label{eq:decS}
S^{1:nh_2(\alpha)}
= (S_{1}^{1:nC(W)}, S_2^{1: nH(X\mid Y)})^T,
\end{equation}
where this decomposition is possible because $h_2(\alpha) = H(X) = C(W)+H(X\mid Y)$. Similarly, it is convenient to write the parity-check matrix $P$ as
\begin{equation}\label{eq:decP}
P= [P_1, P_2]^T,
\end{equation}
where $P_1\in {\mathbb F}_2^{nC(W)\times n}$ and $P_2\in {\mathbb F}_2^{nH(X\mid Y)\times n}$.

The first part of the decomposition \eqref{eq:decS}, namely
$S_{1}^{1:nC(W)}$, contains the information
bits. This is quite different from what happens in a standard parity-check code, in which the values of the parity checks are shared between the encoder and the decoder (and typically fixed to 0). In the
proposed scheme, the parity checks contain the transmitted message.

The second part, namely $S_2^{1: nH(X\mid Y)}$,
is chosen uniformly at random, and this randomness is
assumed to be shared between the transmitter and the receiver. Note that $S_2^{1: nH(X\mid Y)}$ does not depend on the information bits.

The choice of the parity-check matrix $P_2$ concerns the \emph{channel
coding} part of the scheme. Recall the problem considered in Section
\ref{subsec:checks}: given the channel output $Y^{1:n}$ and the parity bits $S_2^{1: nH(X\mid Y)}$, the receiver can reconstruct $X^{1:n}$, as long as the parity-check matrix corresponds to a code that achieves the capacity of the ``symmetrized
channel'' with density given by \eqref{eq:ldenssymmpost}. For example, we can set $P_2$ to be the parity-check matrix of an
$(\dl, \dr)$-regular SC-LDPC ensemble with sufficiently large
degrees.

The choice of the parity-check matrix $P_1$ concerns the \emph{source
coding} part of the scheme. In particular, we choose $P_1$ in order to fulfill the following requirement: we want to associate with each syndrome $S^{1:nh_2(\alpha)}$ a codeword $X^{1:n}$ with $[P_1, P_2]^T X^{1:n} = S^{1:nh_2(\alpha)}$ so that the uniform i.i.d. distribution on the syndromes induces a Bernoulli$(\alpha)$ i.i.d. distribution on the codewords. 

Before moving on with the description of the scheme, let us review how to use sparse graph codes to accomplish lossless source coding. We are given a vector $X^{1:n}$ of $n$ i.i.d. Bernoulli($\alpha$)
random variables and the aim is to compress it into a binary sequence of size roughly $nh_2(\alpha)$.
We want to solve the problem by using the parity-check matrix $\tilde{P}\in {\mathbb F}_2^{nh_2(\alpha)\times n}$
of a sparse graph code as the linear compressor and the BP decoder as the
decompressor, respectively \cite{CSV04dimacs}. More specifically, given
$x^{1:n}$ to be compressed, the encoder computes
$s^{1:nh_2(\alpha)} = \tilde{P} x^{1:n}$. The task of the decoder
can be summarized as follows:

\vspace{0.15em}

\emph{Task 1.} Given the syndrome vector $\tilde{P} x^{1:n}$, recover the biased vector $x^{1:n}$ by using the BP algorithm.

\vspace{0.15em}

Let us now relate this task to a channel coding problem. Let $c^{1:n}$ be a codeword of the code with parity-check matrix $\tilde{P}$, i.e., $\tilde{P} c^{1:n}= 0^{1:nh_2(\alpha)}$. Consider the transmission of $c^{1:n}$ over the binary symmetric channel with crossover probability $\alpha$, i.e., the BSC$(\alpha)$, and let $y^{1:n}$ be the channel output. Denote by $\tilde{P} y^{1:n}$ the syndrome computed by the decoder, and note that $\tilde{P} y^{1:n} = \tilde{P}e^{1:n}$, where $e^{(i)}=1$ if the $i$-th bit was flipped by the channel, and $0$ otherwise. Consider the following two tasks:

\vspace{0.15em}
\emph{Task 2.} Given the syndrome vector $\tilde{P} e^{1:n}$, recover the error vector $e^{1:n}$ by using the BP algorithm.
	
\vspace{0.15em}
\emph{Task 3.} Given the received vector $y^{1:n}$, recover the transmitted codeword $c^{1:n}$ by using the BP algorithm.
\vspace{0.15em}

Let us briefly show that these three tasks are, in fact, equivalent. First of all, note that \emph{Task 1} and \emph{Task 2} are clearly identical. Furthermore, it is shown in \cite{CSV04dimacs}  that \emph{Task 2} succeeds if and only if \emph{Task 3} succeeds. The idea is to write down the message-passing equations in the two cases, and to observe that the messages obtained in \emph{Task 2} can be put in one-to-one correspondence with the messages obtained in \emph{Task 3}. More specifically, on the one hand, in \emph{Task 2} we initialize all the received values at variable nodes by $\ln(\bar{\alpha}/\alpha)$ and the check nodes have an associated sign given by the vector $(-1)^{\tilde{P} e^{1:n}}$. On the other hand, in \emph{Task 3}, we initialize the received values at variable nodes by the vector $(-1)^{y^{1:n}} \cdot \ln(\bar{\alpha}/\alpha)$, and all the check nodes have an associated sign of $+1$. The crucial observation is that, for each iteration of the BP algorithm, the modulus of the received values at variable nodes stays the same for the two tasks, and the sign is flipped according to the value of $y^{1:n}$.

Note that \emph{Task 3} is the standard channel coding problem for the transmission over the BSC. Hence, we can use the parity-check matrix of a code that achieves capacity over the BSC to compress $n$ i.i.d. Bernoulli($\alpha$) random variables into a binary sequence of size roughly $nh_2(\alpha)$. 

Let us come back to our original problem of achieving the capacity
of a B-DMC. The source coding part of our approach is basically the
inverse of source coding. Indeed, given the uniform vector of syndromes $S^{1:nh_2(\alpha)}$, we want to obtain a biased codeword $X^{1:n}$.

Let $P_1$ be the parity-check matrix of a regular SC-LDPC ensemble with sufficiently large degrees. This implies that also $P= [P_1, P_2]^T$ is the parity-check matrix of a regular SC-LDPC ensemble\footnote{Note that, even though $P$ is not chosen according to the usual definition of an SC-LDPC ensemble, it still has the same asymptotic performance.}. First, suppose that the vector of syndromes to be fulfilled has size $m$ slightly larger than $nh_2(\alpha)$, say $m=n(h_2(\alpha)+\epsilon)$ for some small $\epsilon > 0$. Consequently, suppose that the matrix $P$ has $m$ rows. Clearly, given a vector  $X^{1:n}$ of $n$ i.i.d. Bernoulli($\alpha$) random variables, we can always define $S^{1:m}$ as $P X^{1:n} = S^{1:m}$. However, only for a vanishing fraction of possible vectors of syndromes $S^{1:m}$ there exists $X^{1:n}$ s.t. $P X^{1:n} = S^{1:m}$. This means that, for a randomly chosen $S^{1:m}$, with high probability the BP algorithm will not succeed. 

Suppose now that the vector of syndromes to be fulfilled has size $m$ no larger than $nh_2(\alpha)$. Then, with high probability, there are exponentially many $X^{1:n}$ with i.i.d. Bernoulli$(\alpha)$ distribution s.t. $P X^{1:n}= S^{1:m}$. This implies that the BP algorithm does not converge, as a message-passing decoder operating locally can easily get confused when there are many feasible solutions. 

Perhaps a more apt approach is to frame the source coding part of our scheme as a lossy compression problem, where the distortion between the distribution of $X^{1:n}$ and an i.i.d. Bernoulli$(\alpha)$ distribution tends to $0$ as $n$ goes large. It was observed in \cite{CM05} that using a standard BP algorithm is not effective for lossy compression, and that this issue can be overcome by introducing a decimation process. An encoding scheme for lossy compression based on spatially coupled low-density generator-matrix (LDGM) codes and belief-propagation guided decimation is presented in \cite{AMV15}, where it is shown with numerical simulations that the spatially coupled ensemble approaches the Shannon rate-distortion limit for large check degrees. This technique is extended to the Wyner-Ziv and Gelfand-Pinsker problems in \cite{KVNP14}, where it is shown empirically that spatially coupled compound LDGM/LDPC codes with belief-propagation guided decimation achieve the optimal rates. In particular, the solution to the Gelfand-Pinsker problem presents some similarities to our approach: the information bits are placed in a vector of syndromes, and the compound LDGM/LDPC codes are simultaneously good for rate distortion and channel coding. The need for a scheme that is good both for source and channel coding is due to the fact that, in the Gelfand-Pinsker setting, there is a constraint on the average weight of the transmitted codeword. This is analogous to our requirement that $X^{1:n}$ has a Bernoulli$(\alpha)$ i.i.d. distribution.

As mentioned earlier, other schemes adopting a framework similar to the one considered in this section are provided in \cite{MuMi08, MuMi09, MuMi10}. In particular, in \cite{MuMi10} the authors consider LDPC matrices with logarithmic column weight and maximum likelihood decoding. This approach provably achieves the optimal rate by introducing the notion of a hash property (but the decoding algorithm has exponential complexity). The results of \cite{MuMi10} are extended in \cite{Mu14}, where codes for general (thus, possibly asymmetric) channels and sources are constructed. It is interesting to point out that the problem of generating the codeword $X^{1:n}$ is solved in \cite{Mu14} with a constrained-random-number generator, instead of resorting to a belief-propagation type of algorithm.

Our solution follows the lead of \cite{AMV15, KVNP14} and uses belief-propagation guided decimation at the encoder. Note that this approach works well in practice, as testified by the simulation results in \cite{AMV15, KVNP14}, but we currently have no theoretical guarantees on its performance. Let us now get down to the details of the proposed encoding scheme. Given the syndrome vector $S^{1:nh_2(\alpha)}$, we run the standard BP algorithm and, after every $t$ iterations, for some fixed $t \in {\mathbb N}$, we decimate a small fraction of the codeword bits. This means that we set each decimated bit to its most likely value. Furthermore, we fix the modulus of the received values at the corresponding variable nodes to $+\infty$. Consequently, the decimated bits will not change during the next iterations of the algorithm. The procedure ends when all the codeword bits have been decimated.

Let us denote by $\tilde{X}^{1:n}$ the codeword produced as output by the algorithm described above. Note that, because of the decimation steps, it is possible that $[P_1 P_2]^T \tilde{X}^{1:n}$ differs from $S^{1:nh_2(\alpha)}$ in some positions. Furthermore, recall that $S_1^{1:nC(W)}$ contains the information bits. Hence, if $P_1 \tilde{X}^{1:n}$ differs from $S_1^{1:nC(W)}$, even if the decoder correctly reconstructs the transmitted codeword $\tilde{X}^{1:n}$, it will not correctly reconstruct the information sequence. However, the idea is that the fraction of positions in which $P_1 \tilde{X}^{1:n}$ differs from $S_1^{1:nC(W)}$ tends to $0$ as $n$ goes large. This intuition is confirmed by the numerical simulations in \cite{AMV15, KVNP14}. Hence, in order to cope with this issue, we pre-code $S_1^{1:nC(W)}$ with a negligible loss in rate. Note that the specific choice of this pre-code does not affect much the overall performance of the proposed scheme since, as $n$ goes large, the rate of the pre-code is expected to tend to $1$.

\vspace{1em}

\noindent {\bf Encoding.} First, we pre-code the $nC(W)$ information bits with a pre-code $\mathcal C_{\rm p}$ of rate close to $1$. We can use, for instance, an SC-LDPC code or a polar code designed for transmission over the BSC. The output of this pre-coding operation is the sequence $s_{1}^{1:n(C(W)+\epsilon)}$, for some small $\epsilon>0$. Then, we fill $s_2^{1: nH(X\mid Y)}$ with a realization of a sequence chosen uniformly at random and shared between the transmitter and the receiver. Let $P$ be the parity-check matrix of an SC-LDPC code with sufficiently large degrees. From the syndrome vector $s^{1:n(H(X)+\epsilon)} = (s_{1}^{1:n(C(W)+\epsilon)}, s_2^{1: nH(X\mid Y)})^T$ and the parity-check matrix $P$, we obtain the codeword $\tilde{x}^{1:n}$ by running the BP algorithm with decimation steps.

Let $P_1$ be obtained by decomposing $P$ as in \eqref{eq:decS}. Let us now check that we can recover correctly the initial information bits from $P_1 \tilde{x}^{1:n}$ by decoding $\mathcal C_{\rm p}$. If this is not possible, then the overall procedure is repeated with a different choice for the code $\mathcal C_{\rm p}$. Once the decoding of $\mathcal C_{\rm p}$ succeeds, we transmit the vector $\tilde{x}^{1:n}$ over the channel.

The final choice of the pre-code $\mathcal C_{\rm p}$ is shared between the transmitter and the receiver, e.g., by common randomness. Note that the pre-code $\mathcal C_{\rm p}$ needs to be shared once for all before the actual transmission starts.

\vspace{1em}

\noindent {\bf Decoding.} The decoder receives $y^{1:n}$ and runs the BP algorithm using also the vector of syndromes $s_2^{1: nH(X\mid Y)}$ shared with encoder. Let $\hat{x}^{1:n}$ be the output of the BP algorithm. Eventually, we recover the information bits from $P_1 \hat{x}^{1:n}$ by decoding $\mathcal C_{\rm p}$. 

\vspace{1em}

\noindent {\bf Performance.} There are two possible types of errors. First, the encoder might fail to produce a codeword $\tilde{x}^{1:n}$ for which the decoding of $\mathcal C_{\rm p}$ succeeds. Second, the decoder might not estimate correctly the transmitted vector, i.e., $\hat{x}^{1:n} \neq \tilde{x}^{1:n}$. Note that, by construction of $\tilde{x}^{1:n}$, if $\hat{x}^{1:n} = \tilde{x}^{1:n}$, then the decoder recovers correctly the information bits. 

The second error event occurs with vanishing probability, and this result is provable by following the argument of Section \ref{subsec:checks}. Concerning the first error event, we only need that it does not happen with probability $1$, because the encoding operation can be attempted several times. As we previously pointed out, from numerical simulations we observe that $P_1 \tilde{x}^{1:n}$ agrees with $s_{1}^{1:n(C(W)+\epsilon)}$ in almost all the positions. We remark that, to the best of our knowledge, this last statement is not provable, because of the decimation steps introduced in the BP algorithm. If $P_1 \tilde{x}^{1:n}$ and $s_{1}^{1:n(C(W)+\epsilon)}$ are sufficiently close, then, with high probability, we can recover the information bits by decoding $\mathcal C_{\rm p}$.

\section{Paradigm 3: Chaining Construction}\label{sec:chaining}

In the integrated approach, discussed in the preceding section,
the idea was to use a single code to accomplish both the source and
the channel coding part. The chaining construction, on the contrary, enables us to separate
these two tasks. In particular, we provide conditions under which a source code and a channel code can be combined in order to achieve the capacity of any DMC. This idea was first presented in
\cite{BoM11}. Here, we prove that it can be used to achieve the 
capacity of any asymmetric channel. 

The problem statement is the same as in Section \ref{sec:integrated}. Our main idea is formalized by the following theorem. 

\begin{theorem}\label{th:chain}
Let $W$ be a B-DMC with capacity-achieving input distribution $\{p^*(x)\}_{x\in \{0, 1\}}$ s.t. $p^*(1)=\alpha$ for some $\alpha \in [0, 1]$. Denote by $X$ and $Y$ the input and the output of the channel, respectively. Let $n, m, \ell \in \mathbb N$, where $n$ is sufficiently large, $m$ is roughly $nh_2(\alpha)$, and $\ell$ is roughly $nH(X\mid Y)$. Denote by $U^{1:m}$ a sequence of $m$ i.i.d. uniform random variables, and by $X^{1:n}$ a sequence of $n$ i.i.d. Bernoulli$(\alpha)$ random variables.
Let $Y^{1:n}$ be the channel output when $X^{1:n}$ is transmitted. 
Assume that, for any $\delta>0$, there exists $n \in \mathbb N$ and there exist maps $f: \{0, 1\}^n \rightarrow \{0, 1\}^m$, $g: \{0, 1\}^m \rightarrow \{0, 1\}^n$, and $h: \{0, 1\}^n \rightarrow \{0, 1\}^\ell$ that satisfy the following properties.
\begin{enumerate}
\item $U^{1:m}=f(g(U^{1:m}))$, i.e., the map $f$ is invertible, with probability $1-\delta$.
\item The total variation distance between the distribution of $g(U^{1:m})$ and the distribution of $X^{1:n}$ is upper bounded by $\delta$.
\item Given $Y^{1:n}$ and $h(X^{1:n})$, it is possible to reconstruct $X^{1:n}$ with probability $1-\delta$.
\end{enumerate}
Then, we can use $f$, $g$, and $h$ to transmit over $W$ with rate close to $C(W)$.
\end{theorem}

In the following, we will prove this theorem and we will provide choices of $f$, $g$, and $h$ that fulfill the required properties. 

\vspace{1em}

\noindent {\bf Design of the Scheme.} First, we consider the {\em source coding}
part of the scheme. Recall that in the previous section we framed this task as the inverse of source coding and we described a solution that uses sparse graph codes and belief-propagation guided decimation. Let us now consider this problem from a more general point of view.

In the traditional lossless compression setting, the input is a
sequence $X^{1:n}$ with i.i.d. Bernoulli$(\alpha)$ distribution, and the encoder consists of a map from the set $\{0, 1\}^n$ of source sequences to the set $\{0,1\}^*$ of finite-length binary
strings. Let $f:\{0, 1\}^n\to \{0, 1\}^*$ be the encoding map,
so that $U = f(X^{1:n})$ is the compressed description of the source
sequence $X^{1:n}$. For a good source code, the expected binary length
of $U$ is close to the entropy of the source, i.e., $nh_2(\alpha)$.  In addition, the decoding
function $g:\{0, 1\}^*\to \{0, 1\}^n$ is such that $X^{1:n}=g(f(X^{1:n}))$ with high probability over the choice of $X^{1:n}$. Several solutions to this problem
have been proposed to date, such as, Huffman coding, arithmetic coding,
Lempel-Ziv compression \cite{CoT06}, polar codes \cite{Ar10, CrKo10}, and
LDPC codes \cite{CSV04dimacs}, just to name a few.

In our setting, the input is the compressed sequence $U^{1:m}$ with i.i.d. uniform distribution, instead of the biased sequence $X^{1:n}$. Note that $U^{1:m}$ contains the information bits. Furthermore, we consider maps from $\{0, 1\}^n$ to $\{0, 1\}^m$ and vice versa, where $m$ is a fixed integer of size roughly $n h_2(\alpha)$, instead of maps from $\{0, 1\}^n$ to $\{0, 1\}^*$. More specifically, the encoder and the decoder implement the maps $g: \{0, 1\}^m \rightarrow \{0, 1\}^n$ and $f: \{0, 1\}^n \rightarrow \{0, 1\}^m$, respectively. This problem can also be regarded as an instance of \emph{homophonic coding}. Indeed, homophonic coding is a framework to convert a sequence with some probability distribution into an invertible sequence with a different probability distribution. Such a framework was first considered in the context of cryptography \cite{Gu88, JKM89}, in order to ensure that any sequence of ciphertext appears with the same frequency. More specifically, in a few paragraphs we will consider the interval algorithm for homophonic coding scheme presented in \cite{HoHa01}. Furthermore, homophonic coding has been used to generate biased codewords in the context of LDPC codes \cite[Chapter 4]{Hon13} and to construct polar codes for channels with memory \cite{HoYa15}. 

For our scheme, we require that the maps $f$ and $g$ satisfy the first two properties stated in Theorem \ref{th:chain}. Let us justify such requirements.

The first property ensures that, given $g(U^{1:m})$, it is possible to recover $U^{1:m}$ with high probability. This requirement is crucial because $g(U^{1:m})$ represents the codeword transmitted over the channel. Hence, at the decoder, given the channel output, we estimate $g(U^{1:m})$ and, from this, we deduce the information vector $U^{1:m}$. 

The second property ensures that the error probability for the transmission of $g(U^{1:m})$ is upper bounded by the error probability for the transmission of $n$ i.i.d. Bernoulli$(\alpha)$ random variables plus $\delta$. This statement can be proved as follows. Recall that, by definition of total variation distance, the second property can be written as
\begin{equation}\label{eq:TVdist2}
\frac{1}{2}\sum_{x\in \{0, 1\}^n}|{\mathbb P}_{g(U^{1:m})}(x)-{\mathbb P}_{X^{1:n}}(x)| < \delta,
\end{equation}
where $X^{1:n}$ has an i.i.d. Bernoulli$(\alpha)$ distribution. Then, by using that
\begin{equation*}
\sum_{x\in \{0, 1\}^n}{\mathbb P}_{g(U^{1:m})}(x) = \sum_{x\in \{0, 1\}^n}{\mathbb P}_{X^{1:n}}(x) = 1,
\end{equation*}
we obtain
\begin{equation}\label{eq:TVdist3}
\sum_{x\in \{0, 1\}^n}\max({\mathbb P}_{g(U^{1:m})}(x)-{\mathbb P}_{X^{1:n}}(x), 0) < \delta.
\end{equation}
Denote by $P_{\rm e}$ and $\tilde{P}_{\rm e}$ the error probabilities when the transmitted codeword is distributed according to $g(U^{1:m})$ and $X^{1:n}$, respectively. Then,
\begin{equation}\label{eq:req2}
\begin{split}
P_{\rm e} &= \sum_{x\in \{0, 1\}^n} {\mathbb P}(\mbox{error} \mid x){\mathbb P}_{g(U^{1:m})}(x) = \sum_{x\in \{0, 1\}^n} {\mathbb P}(\mbox{error} \mid x)({\mathbb P}_{g(U^{1:m})}(x)-{\mathbb P}_{X^{1:n}}(x))+ \sum_{x\in \{0, 1\}^n} {\mathbb P}(\mbox{error} \mid x){\mathbb P}_{X^{1:n}}(x)\\
&\le \sum_{x\in \{0, 1\}^n} {\mathbb P}(\mbox{error} \mid x)\cdot \max({\mathbb P}_{g(U^{1:m})}(x)-{\mathbb P}_{X^{1:n}}(x), 0)+ \sum_{x\in \{0, 1\}^n} {\mathbb P}(\mbox{error} \mid x){\mathbb P}_{X^{1:n}}(x)\\
&<\delta + \tilde{P}_{\rm e},
\end{split}
\end{equation}
where the last inequality uses \eqref{eq:TVdist3} and that ${\mathbb P}(\mbox{error} \mid x)\le 1 $. 
The requirement \eqref{eq:req2} is crucial because, in the channel coding part of the scheme, we assume that the transmitted codeword has an i.i.d. Bernoulli$(\alpha)$ distribution.

Let us now describe how to construct maps $f$ and $g$ such that these maps satisfy the desired properties. One possible solution is based on polar codes, and the idea follows closely the scheme described in Section \ref{subsec:polar}. Given $U^{1:m}$ as input, we put it into the positions indexed by $\hset_X$ defined in \eqref{lossless_sets}, and set the remaining positions according to the ``randomized rounding'' rule \eqref{eq:randround}. Then, we multiply this vector with the matrix $G_n$, and define $g(U^{1:m})$ as the result of this last operation. Given $X^{1:n}$ as input, we multiply it with the matrix $G_n$, and extract the positions indexed by $\hset_X$. We define $f(X^{1:n})$ as the result of this last operation. It is clear that $U^{1:m}=f(g(U^{1:m}))$, hence the first property of Theorem \ref{th:chain} is satisfied. By following the proof of Theorem 3 of \cite{HY13}, we also obtain that the total variation distance between the distribution of $g(U^{1:m})$ and the i.i.d. Bernoulli$(\alpha)$ distribution is upper bounded by $2^{-n^{\beta}}$ for $\beta<1/2$. Hence,  the second property of Theorem \ref{th:chain} is satisfied.

An alternative solution coincides with a special case of the interval algorithm for fixed-to-fixed homophonic coding proposed in Section III-B of \cite{HoHa01}. Let us start by defining the map $g$. We partition the interval $[0, 1)$ into $2^m$ sub-intervals of size $2^{-m}$. Given $U^{1:m}$ as input, we interpret this sequence as the integer $K\in \{0, \cdots, 2^m-1\}$, and we pick a point $\bar{U}$ uniformly at random in the sub-interval $[K2^{-m}, (K+1)2^{-m})$. The output sequence $g(U^{1:m})$ is obtained from $\bar{U}$ as follows. Given an interval $I=[i_{\rm start}, i_{\rm end})$, we partition it into the sub-intervals $I_1=[i_{\rm start}, i_{\rm start} + \alpha(i_{\rm end}-i_{\rm start}))$ and $I_2 = [i_{\rm start} + \alpha(i_{\rm end}-i_{\rm start}), i_{\rm end})$. Note that $|I_1| = \alpha |I|$ and $|I_2| = (1-\alpha)|I|$, where $|I|$, $|I_1|$, and $|I_2|$ denote the sizes of $I$, $I_1$, and $I_2$, respectively. We initialize $I$ to be the interval $[0, 1)$. If $\bar{U} \in I_1$, then we output a $1$ and redefine $I$ to be $I_1$; otherwise, we output a $0$ and redefine $I$ to be $I_2$. By repeating this procedure $n$ times, we obtain the sequence $g(U^{1:m})$.

Let us define the map $f$. Given $X^{1:n}$ as input, we evaluate the interval $I$ according to the following iterative procedure. We initialize $I$ to be the interval $[0, 1)$. If the input is $1$, we redefine $I$ to be $I_1$; otherwise, we redefine $I$ to be $I_2$. As the input sequence $X^{1:n}$ has length $n$, we repeat this operation $n$ times. Then, we pick a point $\bar{X}$ uniformly at random in the resulting interval $I$. Let $K$ be such that $\bar{X} \in [K2^{-m}, (K+1)2^{-m})$. We define $f(X^{1:n})$ to be the sequence associated to the integer $K$.

Let us prove that the maps $f$ and $g$ defined above satisfy the desired properties. As $U^{1:m}$ is a sequence of $m$ i.i.d. uniform random variables, the point $\bar{U}$ is uniformly distributed in $[0, 1)$. Hence, the sequence $g(U^{1:m})$ obtained with the aforementioned procedure is exactly a sequence of $n$ i.i.d. Bernoulli$(\alpha)$ random variables. As a result, the second property of Theorem \ref{th:chain} holds.

Let us prove that also the first property holds. Recall that to each sequence $u\in \{0, 1\}^m$ is associated an interval $I(u)$ of size $|I(u)|=2^{-m}$. Since all these intervals have the same size, we say that they are \emph{even}. Furthermore, to each sequence $x\in \{0, 1\}^n$ is associated an interval $I(x)$ of size $|I(x)|=\alpha^{w_{\rm H}(x)}(1-\alpha)^{n-w_{\rm H}(x)}$, where $w_{\rm H}(\cdot)$ denotes the Hamming weight. Since these intervals do not have the same size, we say that they are \emph{odd}. The crucial observation is the following: if the random variable $\bar{U}$ falls into an odd interval that is entirely contained into an even interval, then $f(g(U^{1:m}))=U^{1:m}$. Denote by $\mathcal I_{\rm bad}$ the set of \emph{bad odd} intervals, i.e., the set of odd intervals that are not entirely contained into an even interval. Then, for any $\epsilon > 0$,
\begin{equation}\label{eq:arithmetic}
\begin{split}
\mathbb{P}(f(g(U^{1:m}))\neq U^{1:m}) &\le \mathbb{P}(\bar{U}\in \mathcal I_{\rm bad}) \\
&\stackrel{\mathclap{\mbox{\footnotesize(a)}}}{=} \sum_{x\in \{0, 1\}^n : I(x)\in I_{\rm bad}} |I(x)| \\
&= \sum_{\substack{x\in \{0, 1\}^n : I(x)\in I_{\rm bad}\\ |I(x)| > 2^{-n(h_2(\alpha)-\epsilon)}}} |I(x)| + \sum_{\substack{x\in \{0, 1\}^n : I(x)\in I_{\rm bad}\\ |I(x)| \le 2^{-n(h_2(\alpha)-\epsilon)}}} |I(x)|, 
\end{split}
\end{equation} 
where in (a) we use that $\bar{U}$ is uniformly distributed in $[0, 1)$. The first term of the RHS of \eqref{eq:arithmetic} can be upper bounded as
\begin{equation}\label{eq:sum1}
\begin{split}
\sum_{\substack{x\in \{0, 1\}^n : I(x)\in I_{\rm bad}\\ |I(x)| > 2^{-n(h_2(\alpha)-\epsilon)}}} |I(x)| &\le \sum_{\substack{x\in \{0, 1\}^n \\ |I(x)| > 2^{-n(h_2(\alpha)-\epsilon)}}} |I(x)| \\
&\stackrel{\mathclap{\mbox{\footnotesize(a)}}}{=} \sum_{\substack{x\in \{0, 1\}^n \\ |I(x)| > 2^{-n(h_2(\alpha)-\epsilon)}}}\mathbb{P}(X^{1:n}=x) \\
&= \mathbb{P}\left(\alpha^{w_{\rm H}(X^{1:n})}(1-\alpha)^{n-w_{\rm H}(X^{1:n})}> 2^{-n(h_2(\alpha)-\epsilon)}\right),
\end{split}
\end{equation}
where in (a) we use that $\mathbb{P}(X^{1:n}=x)=|I(x)|$. As $X^{1:n}$ has an i.i.d. Bernoulli$(\alpha)$ distribution, we have that, for any $\epsilon_1 >0$,
\begin{equation*}
\mathbb{P}\left(w_{\rm H}(X^{1:n})\not\in(nh_2(\alpha)-\epsilon_1, nh_2(\alpha)+\epsilon_1)\right)\underset{n\to\infty}{\longrightarrow} 0.
\end{equation*}
Hence, the RHS of \eqref{eq:sum1} tends to $0$, which implies that the first term of the RHS of \eqref{eq:arithmetic} tends to $0$. Furthermore,
\begin{equation*}
\sum_{\substack{x\in \{0, 1\}^n : I(x)\in I_{\rm bad}\\ |I(x)| \le 2^{-n(h_2(\alpha)-\epsilon)}}} |I(x)| \le 2^m \cdot 2^{-n(h_2(\alpha)-\epsilon)},
\end{equation*}
since there are $2^m$ possible intervals $I(u)$ that can be intersected and the size of $I(x)$ is at most  $2^{-n(h_2(\alpha)-\epsilon)}$. Hence, by taking $m=n(h_2(\alpha)-2\epsilon)$, also the second term of the RHS of \eqref{eq:arithmetic} tends to $0$. This suffices to prove that the first property of Theorem \ref{th:chain} holds. 

Another possible solution is based on sparse graph codes, and the idea follows closely the scheme described in Section \ref{subsec:sparse}. Let $P \in {\mathbb F}_2^{m \times n}$ be the parity-check matrix of, e.g., an SC-LDPC code with sufficiently large degrees. Given $U^{1:m}$, we use it to initialize the values of the check nodes, and run belief-propagation guided decimation. Then, we define $g(U^{1:m})$ as the output of the algorithm. Given $X^{1:n}$, we define $f(X^{1:n})$ as $P X^{1:n}$. As pointed out in Section \ref{subsec:sparse}, $U^{1:m}$ differs from $f(g(U^{1:m}))$ in a vanishing fraction of positions, but we can cope with this issue by pre-coding $U^{1:m}$ with a negligible loss in rate. Note that this solution works well in practice, but, to the best of our knowledge, it is not provable.

Note that the second property of Theorem \ref{th:chain} is rather stringent. In fact, a weaker condition is sufficient, provided that the transmitter and the receiver have shared randomness. Let us describe this weaker condition in detail. Given a binary sequence $x^{1:n}\in {\mathbb F}_2^n$, let $\tau(x^{1:n})$ denote its type, i.e., the number of $1$s contained in the sequence. Then, for Theorem \ref{th:chain} to hold, rather than requiring that the distributions of $g(U^{1:m})$ and $X^{1:n}$ are roughly the same, it suffices that the distributions of $\tau(g(U^{1:m}))$ and $\tau(X^{1:n})$ are roughly the same and that a permuted version of $g(U^{1:m})$, call it $\pi(g(U^{1:m}))$, is transmitted over the channel. We require shared randomness, as the random permutation $\pi$ needs to be shared between the transmitted and the receiver. These concepts are formalized by the following proposition, whose proof immediately follows. 

\begin{proposition}\label{prop:TVdisttype}
Denote by $X^{1:n}$ a sequence of $n$ i.i.d. Bernoulli$(\alpha)$ random variables for some $\alpha\in [0, 1]$. Let $g(U^{1:m})$ be such that the total variation distance between the distribution of $\tau(g(U^{1:m}))$ and the distribution of $\tau(X^{1:n})$ is at most $\delta$. Let $\pi: [n]\to [n]$ be a random permutation. Then, the total variation distance between the distribution of $\pi(g(U^{1:m}))$ and the distribution of $X^{1:n}$ is at most $\delta$. 
\end{proposition}

\begin{proof}
Note that the type of a sequence is equal to the type of any permutation of such a sequence. By using this fact and the definition of type $\tau$, we have that
\begin{equation}\label{eq:TVchaineq}
\begin{split}
\frac{1}{2}\sum_{x\in \{0, 1\}^n}|{\mathbb P}_{\pi(g(U^{1:m}))}(x)-{\mathbb P}_{X^{1:n}}(x)| &=\frac{1}{2} \sum_{i=0}^n\sum_{\substack{x\in \{0, 1\}^n\\ \tau(x) = i}}|{\mathbb P}_{\pi(g(U^{1:m}))}(x)-{\mathbb P}_{X^{1:n}}(x)|\\
&=\frac{1}{2}\sum_{i=0}^n |{\mathbb P}_{\tau(\pi(g(U^{1:m})))}(i) - {\mathbb P}_{\tau(X^{1:n})}(i)|\\
&=\frac{1}{2}\sum_{i=0}^n |{\mathbb P}_{\tau(g(U^{1:m}))}(i) - {\mathbb P}_{\tau(X^{1:n})}(i)|\\
\end{split}
\end{equation}
On the one hand, the LHS of \eqref{eq:TVchaineq} represents the total variation distance between the distribution of $\pi(g(U^{1:m}))$ and the distribution of $X^{1:n}$. On the other hand, the RHS of \eqref{eq:TVchaineq} represents the total variation distance between the distribution of $\tau(g(U^{1:m}))$ and the distribution of $\tau(X^{1:n})$. Hence, the claim readily follows.
\end{proof}

As a result, by simply using an extra shared random permutation, if $g(U^{1:m})$ and $X^{1:n}$ have roughly the same type, then we fulfill the rather stringent second property of Theorem \ref{th:chain}.

In summary, in the source coding part of the scheme, we are given as input the vector $U^{1:m}$ with i.i.d. uniform distribution, and we generate the codeword $g(U^{1:m})$ that is transmitted over the channel. The distribution $g(U^{1:m})$ is close in total variation distance to an i.i.d. Bernoulli$(\alpha)$ distribution. Hence, by paying a negligible price in terms of the error probability, we can assume that the transmitted codeword is the sequence $X^{1:n}$ with i.i.d. Bernoulli$(\alpha)$ distribution. The {\em channel coding} part of the scheme consists in transmitting reliably $X^{1:n}$ over the channel. A similar problem has been considered in Section \ref{subsec:checks}. There, we have proved that $X^{1:n}$ can be reconstructed with high probability, given the channel output $Y^{1:n}$ and $nH(X\mid Y)$ additional bits of information. Recall that $h: \{0, 1\}^n \rightarrow \{0, 1\}^\ell$, with $\ell$ roughly equal to $nH(X\mid Y)$, hence the mapping $h(X^{1:n})$ provides these extra $nH(X\mid Y)$ bits. This means that, if we were able to share the vector $h(X^{1:n})$ between the transmitter and the receiver, we would be done. However, $h(X^{1:n})$ obviously depends on $X^{1:n}$, hence on the information vector $U^{1:m}$. Thus, it is not immediately clear how to share this vector. 

To solve this issue, we draw inspiration 
from the ``chaining'' construction introduced in \cite{BoM11, HRunipol}. 
We consider the transmission of $k$ blocks of information, and use a
part of the current block to store the parity-check vector of the previous
block. More specifically, in block $1$, we fill $U^{1:m}$ with information
bits, compute $X^{1:n} = g(U^{1:m})$, and $h(X^{1:n})$. In block $j$ ($j\in \{2, \cdots, k-1\}$), we fill $U^{1:m}$ with
the vector $h(X^{1:n})$ of the previous block, and store the 
information bits in the remaining positions. Note that the vector $h(X^{1:n})$ has roughly size $nH(X\mid Y)$, and recall that $m$ is close to $nh_2(\alpha)$. Hence, the transmission rate is approximately $h_2(\alpha)-H(X\mid Y)=C(W)$. Then, we compute $X^{1:n} =
g(U^{1:m})$ and $h(X^{1:n})$. In block $k$, we transmit
only the vector $h(X^{1:n})$ of the previous block at rate $C_{\rm s}(W)$, by using a code that achieves the symmetric capacity of $W$ (see Section \ref{subsec:symmcap} for polar coding and
sparse graph coding schemes that achieve such a goal). Note that, in the last block, we suffer a rate loss, as we are limited to a rate of $C_{\rm s}(W) < C(W)$. However, this
rate loss decays as $1/k$ and, by choosing $k$ large, we achieve
a rate arbitrarily close to $C(W)$.

At the receiver, we perform the decoding ``backwards'',
by starting with block $k$ and ending with block 1. More specifically, block $k$
can be easily decoded, as the underlying code achieves the symmetric
capacity of the channel. For block $j$ ($j\in \{k-1, \cdots, 1\}$),
the decoder can recover $X^{1:n}$ by using the channel output and $h(X^{1:n})$. Indeed, this last vector is obtained from the next block that is already decoded. Finally, from $X^{1:n}$
we deduce $U^{1:m} = f(X^{1:n})$.

Let us now describe how to construct a map $h$ that fulfills the desired property.
One possible solution is based on sparse graph codes. Let $P\in {\mathbb F}_2^{\ell\times n}$ be the parity-check matrix of an SC-LDPC code with sufficiently large degrees. Then, we set $h(X^{1:n}) = PX^{1:n}$, and conclude that the third property of Theorem \ref{th:chain} holds by following the argument of Section \ref{subsec:checks}. An alternative solution is based on polar coding techniques. We multiply $X^{1:n}$ with the polarizing matrix $G_n$ and extract the positions indexed by the complement of the set $\lset_{X\mid Y}$ defined in \eqref{eq:cardint1}. Then, we define $h(X^{1:n})$ as the result of this last operation. Furthermore, by applying \eqref{eq:cardint2}, we easily verify that the vector $h(X^{1:n})$ has the correct size, i.e., roughly $nH(X\mid Y)$. Furthermore, by definition, the set $\lset_{X\mid Y}$ contains the positions that are approximately a deterministic function of the previously decoded bits and the output. Hence, we can reconstruct $X^{1:n}$ with high probability given $Y^{1:n}$ and $h(X^{1:n})$, and the required property of $h$ holds.

Since we fill $U^{1:m}$ with the vector $h(X^{1:n})$ of the previous block, we require that $h(X^{1:n})$ has an i.i.d. uniform distribution. If this is not the case, we XOR the vector $h(X^{1:n})$ with an i.i.d. uniform sequence that is shared between the transmitter and the receiver. By doing so, the new vector of syndromes has exactly an i.i.d. uniform distribution. In the last two paragraphs of Section VI-A, we discuss how we can avoid this use of extra shared randomness at the cost of an increased error probability. 

\begin{figure}[t]

    \centering

\psset{arrowscale=1}
\psset{unit=0.5cm}
\psset{xunit=1,yunit=1}
\begin{pspicture}(-6,-10)(32,7)
\psframe(-5,3)(-1,4.5)
\psline[linecolor=black,linewidth=0.7pt]{->}(-1,3.75)(1,3.75)
\psframe(1,3)(6,4.5)
\psline[linecolor=black,linewidth=0.7pt]{->}(6,3.75)(10,3.75)
\psframe(10,3)(15,4.5)
\psline[linecolor=black,linewidth=0.7pt]{->}(3.5,3)(3.5,1.5)
\psframe(2.5,0)(4.5,1.5)

\rput(-3,5){\small{$u_{(1)}^{1:m}$}}
\rput(3.5,5){\small{$x_{(1)}^{1:n}$}}
\rput(0,4.2){\small{$g$}}
\rput(8,4.2){\small{$W$}}
\rput(4,2.25){\small{$h$}}
\rput(12.5,5){\small{$y_{(1)}^{1:n}$}}
\rput(0.75,0.75){\small{$s_{(1)}^{1:\ell}$}}

\rput(21.5,5){\small{$\hat{x}_{(1)}^{1:n}$}}
\rput(21.5,-1.5){\small{$\hat{x}_{(2)}^{1:n}$}}

\rput(28,5){\small{$\hat{u}_{(1)}^{1:m}$}}
\rput(28.5,-1.5){\small{$\hat{u}_{(2)}^{1:m}$}}

\psframe[linecolor=red, linewidth=1.5pt, linestyle=dashed](-4.95,-3.45)(-3,-2.05)
\psframe(-5,-3.5)(-1,-2)
\psline[linecolor=black,linewidth=0.7pt]{->}(-1,-2.75)(1,-2.75)
\psframe(1,-3.5)(6,-2)
\psline[linecolor=black,linewidth=0.7pt]{->}(6,-2.75)(10,-2.75)
\psframe(10,-3.5)(15,-2)
\psline[linecolor=black,linewidth=0.7pt]{->}(3.5,-3.5)(3.5,-5)
\psframe(2.5,-6.5)(4.5,-5)

\psline[linecolor=red,linewidth=0.7pt, linestyle=dashed]{->}(3.5,0.75)(-4,-2.75)

\rput(-3,-1.5){\small{$u_{(2)}^{1:m}$}}
\rput(3.5,-1.5){\small{$x_{(2)}^{1:n}$}}
\rput(0,-2.3){\small{$g$}}
\rput(8,-2.3){\small{$W$}}
\rput(4,-4.25){\small{$h$}}
\rput(12.5,-1.5){\small{$y_{(2)}^{1:n}$}}
\rput(0.75,-5.75){\small{$s_{(2)}^{1:\ell}$}}

\psframe[linecolor=red, linewidth=1.5pt, linestyle=dashed](-4.95,-9.95)(-3.05,-8.55)
\psframe(-5,-10)(-3,-8.5)
\rput(-4,-8){\small{$u_{(3)}^{1:\ell}$}}
\psframe(3,-10)(6,-8.5)
\psline[linecolor=black,linewidth=0.7pt]{->}(-3,-9.25)(3,-9.25)
\rput(4.6,-8){\small{$x_{(3)}^{1:n'}$}}
\psline[linecolor=black,linewidth=0.7pt]{->}(6,-9.25)(10,-9.25)
\rput(0,-8.8){\small{${\mathcal E}_{\rm s}$}}
\rput(8,-8.8){\small{$W$}}
\psframe(10,-10)(13,-8.5)
\rput(11.6,-8){\small{$y_{(3)}^{1:n'}$}}
\psline[linecolor=red,linewidth=0.7pt, linestyle=dashed]{->}(3.5,-5.75)(-4,-9.25)

\psline[linecolor=black,linewidth=0.7pt]{->}(13,-9.25)(19,-9.25)
\psline[linecolor=black,linewidth=0.7pt]{->}(15,-2.25)(19,-2.25)
\psline[linecolor=black,linewidth=0.7pt]{->}(15,4.25)(19, 4.25)

\psframe(19,-10)(21,-8.5)

\rput(16,-8.8){\small{${\mathcal D}_{\rm s}$}}

\psframe[linecolor=red, linewidth=1.5pt, linestyle=dashed](19.05,-9.95)(20.95,-8.55)

\psframe(19,-3.5)(24,-2)
\psframe(19,3)(24,4.5)

\psline[linecolor=black,linewidth=0.7pt]{->}(24,-2.75)(26,-2.75)
\psline[linecolor=black,linewidth=0.7pt]{->}(24,3.75)(26, 3.75)

\rput(18,-2.75){\small{${\mathcal D}$}}
\psline[linecolor=red,linewidth=0.7pt, linestyle=dashed]{->}(17,-3.25)(19,-3.25)
\psline[linecolor=red,linewidth=0.7pt, linestyle=dashed]{-}(17,-6.5)(17,-3.25)
\psline[linecolor=red,linewidth=0.7pt, linestyle=dashed]{-}(20,-6.5)(17,-6.5)
\psline[linecolor=red,linewidth=0.7pt, linestyle=dashed]{-}(20,-8.5)(20,-6.5)

\rput(18,3.75){\small{${\mathcal D}$}}
\psline[linecolor=red,linewidth=0.7pt, linestyle=dashed]{->}(17,3.25)(19,3.25)
\psline[linecolor=red,linewidth=0.7pt, linestyle=dashed]{-}(17,0)(17,3.25)
\psline[linecolor=red,linewidth=0.7pt, linestyle=dashed]{-}(27,0)(17,0)
\psline[linecolor=red,linewidth=0.7pt, linestyle=dashed]{-}(27,-2)(27,0)

\rput(24,0.6){\small{$\hat{s}_{(1)}^{1:\ell}$}}

\rput(19,-5.9){\small{$\hat{s}_{(2)}^{1:\ell}$}}
\rput(22,-9.25){\small{$\hat{u}_{(3)}^{1:\ell}$}}

\psframe(26,-3.5)(30,-2)
\psframe(26,3)(30,4.5)

\psframe[linecolor=red, linewidth=1.5pt, linestyle=dashed](26.05,-3.45)(27.95,-2.05)

\rput(25,4.2){\small{$f$}}
\rput(25,-2.3){\small{$f$}}

\end{pspicture}

\caption{Coding over asymmetric channels via the chaining construction. We consider the transmission of $k=3$ blocks, and store the vector of parity checks of block $j-1$ into block $j$ ($j\in \{2, 3\}$).}
\label{fig:chainingscheme}
\end{figure}
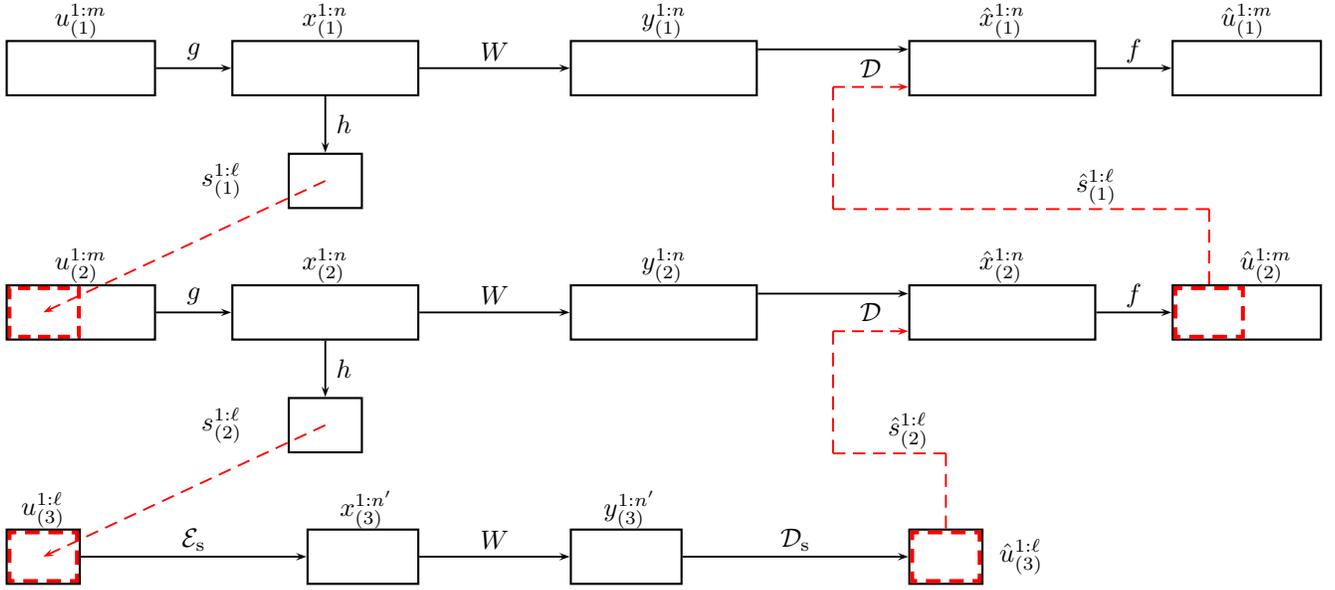

\vspace{1em}

\noindent {\bf Encoding.} Let ${\mathcal C}_{\rm s}$ be a code that achieves the symmetric
capacity of the channel $W$ and denote by ${\mathcal E}_{\rm s}$ its encoder. We consider the transmission of $k$ blocks and encode them in order, starting from block $1$ and ending with block $k$.

In block $1$, we place the information into $u_{(1)}^{1:m}$, compute the codeword $x_{(1)}^{1:n} = g(u_{(1)}^{1:m})$ and the vector $s_{(1)}^{1:\ell} = h(x_{(1)}^{1:n})$. The codeword $x_{(1)}^{1:n}$ is transmitted over the channel $W$ and the vector $s_{(1)}^{1:\ell}$ is stored into the next block. 

In block $j$ ($j\in \{2, \cdots, k-1\}$), we place $s_{(j-1)}^{1:\ell}$ and $nC(W)$ information bits into $u_{(j)}^{1:m}$. Note that this is possible, as $\ell \approx nH(X\mid Y)$, $m\approx nh_2(\alpha)$, and $H(X\mid Y)+C(W) = h_2(\alpha)$. Then, we compute the codeword $x_{(j)}^{1:n} = g(u_{(j)}^{1:m})$, and the vector $s_{(j)}^{1:\ell} = h(x_{(j)}^{1:n})$. Once again, the codeword $x_{(j)}^{1:n}$ is transmitted over the channel $W$ and the vector $s_{(j)}^{1:\ell)}$ is stored into the next block.

In block $k$, we place $s_{(k-1)}^{1:\ell}$ into $u_{(k)}^{1:\ell}$. Then, we map $u_{(k)}^{1:\ell}$ into the codeword $x_{(k)}^{1:n'}$ via the encoder ${\mathcal E}_{\rm s}$. Note that $n'$ is an integer roughly equal to $\ell/C_{\rm s}(W)$, as the code ${\mathcal C}_{\rm s}$ has rate close to $C_{\rm s}(W)$.

The overall rate of communication is given by
\begin{equation} \label{eq:ratechain}
R = \frac{nh_2(\alpha)+n(k-2)C(W)}{n(k-1)+nH(X\mid Y)/C_{\rm s}(W)},
\end{equation}
that, as $k$ goes large, tends to the required rate $C(W)$. 

\vspace{1em}

\noindent {\bf Decoding.} Denote by ${\mathcal D}_{\rm s}$ the decoder of the code ${\mathcal C}_{\rm s}$ and by ${\mathcal D}$ the decoder that recovers the codeword $x^{1:n}$ given the channel output $y^{1:n}$ and the vector $h(x^{1:n})$. The decoding process begins after all the $k$ blocks have been received, and it operates  ``backwards'', starting from block $k$ and ending with block $1$.  

In block $k$, the decoder ${\mathcal D}_{\rm s}$ accepts as input the received message $y_{(k)}^{1:n'}$. The output is the estimate $\hat{u}_{(k)}^{1:\ell}$ on the payload $u_{(k)}^{1:\ell}$ of block $k$. This immediately yields the estimate $\hat{s}_{(k-1)}^{1:\ell}$ on the vector $s_{(k-1)}^{1:\ell}$ of block $k-1$.

In block $j$ ($j\in \{k-1, \cdots, 2\}$), the decoder ${\mathcal D}$ accepts as inputs the received message $y_{(j)}^{1:n}$ and the previously obtained estimate $\hat{s}_{(j)}^{1:\ell}$. The output is the estimate $\hat{x}_{(j)}^{1:n}$ on the codeword $x_{(j)}^{1:n}$ of block $j$. Then, we compute $\hat{u}_{(j)}^{1:m} = f(\hat{x}_{(j)}^{1:n})$, which yields an estimate on the information bits transmitted in block $j$ and on the vector $s_{(j-1)}^{1:\ell}$ of block $j-1$.

For block $1$ the decoding process is the same as that for block $j$ ($j\in \{k-1, \cdots, 2\}$). The only difference consists in the fact that $\hat{u}_{(1)}^{1:m}$ contains solely an estimate on information bits. 

The situation is schematically represented in Figure \ref{fig:chainingscheme}. 

\vspace{1em}

\noindent {\bf Performance.} There are four possible types of errors.
\begin{enumerate}
	\item In block $j$ ($j\in \{1, \cdots, k-1\}$), given that $\hat{x}_{(j)}^{1:n}= x_{(j)}^{1:n}$, we might have that $\hat{u}_{(j)}^{1:m} \neq u_{(j)}^{1:m}$.  
	\item In block $j$ ($j\in \{1, \cdots, k-1\}$), the encoder might fail to produce a codeword $x_{(j)}^{1:n}$ with the correct distribution (namely, with roughly $n\alpha$ $1$s).
	\item In block $j$ ($j\in \{1, \cdots, k-1\}$), we might have that $\hat{x}_{(j)}^{1:n}\neq x_{(j)}^{1:n}$.
	\item In block $k$, we might have that $\hat{x}_{(k)}^{1:n}\neq x_{(k)}^{1:n}$.
\end{enumerate}

By hypothesis, the map $f$ fulfills the first property stated in Theorem \ref{th:chain}. Hence, by the union bound, the probability that the first error event takes place is at most $(k-1)\delta$. Similarly, the map $g$ fulfills the second property stated in Theorem \ref{th:chain}. Hence, by the union bound, the probability that the second error event takes place is at most $(k-1)\delta$. Furthermore, the map $h$ fulfills the third property stated in Theorem \ref{th:chain}. Hence, by the union bound, the probability that the third error event takes place is at most $(k-1)\delta$. As the code ${\mathcal C}_{\rm s}$ achieves the symmetric capacity of the channel $W$, the last event occurs with probability at most $\delta$. As a result, the error probability of the proposed scheme is upper bounded by $3k\delta$. Recall that $k$ is large but fixed (it only depends on the rate we want to achieve), hence, by choosing $\delta$ sufficiently small, the proof of Theorem \ref{th:chain} is complete.

\section{Performance Comparison Between the Three Paradigms} \label{sec:comp}


\subsection{Error Probability}

First, consider \emph{Gallager's mapping}. Recall that in Section \ref{sec:gallager} we describe two schemes: one based on a single non-binary code $\mathcal C$, and the other based on $t$ binary codes $\{\mathcal C_j\}_{j\in [t]}$. Let $W$ be the transmission channel and let $W'$ be defined as in \eqref{eq:mapqex}. Then, for the scheme based on a single non-binary code, the error probability is the same as that of the transmission of $\mathcal C$ over $W'$. For $j\in [t]$, let $W_j''$ be defined as in \eqref{eq:defWg}. Then, for the scheme based on $t$ binary codes, the error probability is upper bounded by the sum over $j$ of the error probabilities of the transmission of $\mathcal C_j$ over $W_j''$. This means that we need to multiply the error probability by a factor of $t$. 

Second, consider the \emph{integrated scheme}. In \cite{HY13} the authors provide a comparison between the second-order error exponent of Gallager's mapping and of the integrated scheme with polar codes revised in Section \ref{subsec:polar}. In particular, let $p$ be the input distribution induced by Gallager's mapping. Then, if the transmission rate $R$ is sufficiently close\footnote{The exact condition is $R> I(p)-I(p^*)(I(p^*)-I(p))$, where $p^*$ is the capacity-achieving input distribution.} to $I(p)$, the integrated scheme achieves a better second-order error exponent than Gallager's mapping.

Third, consider the \emph{chaining construction}. Recall that in Section \ref{sec:chaining} we have divided the transmission in $k$ blocks and performed the decoding ``backwards''. This method suffers from error propagation, in the sense that an error occurring in block $t$ propagates to all the previous blocks from $t-1$ to $1$. Hence, we need to multiply the error probability by a factor of $k$, as pointed out at the end of Section \ref{sec:chaining}. Note that such a behavior occurs also in \cite{MHSU14}, where a similar chaining construction is employed to devise polar coding schemes for the broadcast channel. More specifically, in formulae (41) and (51) of \cite{MHSU14} there is a factor of $k$ in the expression of the error probability.

Recall also that the total variation distance between the distribution of $g(U^{1:m})$ and an i.i.d. Bernoulli$(\alpha)$ distribution is at most $\delta$. Thus, the total variation distance between the distribution of $h(X^{1:n})$ and i.i.d. uniform distribution is at most $\delta$.\footnote{To see this, let $X$ and $X'$ be random vectors such that their total variation distance is at most $\delta$. This means that we can couple them so that they differ with probability at most $\delta$. Hence, also $h(X)$ and $h(X')$ differ with probability at most $\delta$, which implies that the total variation distance between $h(X)$ and $h(X')$ is at most $\delta$.} Furthermore, if the total variation distance between the distribution of $U^{1:m}$ and an i.i.d. uniform  distribution is at most $\delta$, then the total variation distance between the distribution of $g(U^{1:m})$ and an i.i.d. Bernoulli$(\alpha)$ distribution is at most $2\cdot\delta$. As a result, in block $t$ the total variation distance between the distribution of the codeword and an i.i.d. Bernoulli$(\alpha)$ distribution is at most $t\cdot\delta$. Since the overall error probability is upper bounded by the sum of the error probabilities of the single blocks, the dependence of the error probability is quadratic in $k$.

As previously pointed out, this issue can be solved (and the dependence of the error probability can be made again linear in $k$) by requiring common randomness. Anyway, in the case of polar codes, the fact that the dependency of the error probability is linear or quadratic in $k$ does not influence much the scaling behavior in the regime in which a rate $R<C(W)$ is fixed and $n$ goes to $\infty$. Indeed, in this regime the error probability under successive cancellation decoding scales as $2^{-\sqrt{n}}$, and the number of blocks $k$ is a constant independent of $n$.

\subsection{Rate Penalty}

First, consider \emph{Gallager's mapping}. In this case, the rate penalty comes from the fact that the distribution $p$ induced by the map might not be exactly equal to the capacity-achieving input distribution $p^*$. We quantify the rate penalty in Proposition \ref{prop:boundmi} in terms of the total variation distance between $p$ and $p^*$, and of the cardinalities of the input and output alphabets. The rate penalty is also studied in \cite{HY13} for the special case of binary input alphabet. Note that the bound obtained in formula (25) of \cite{HY13} is tighter\footnote{Let $\delta$ be the total variation distance between $p$ and $p^*$. Then, formula (25) of \cite{HY13} gives that the rate penalty is $O(\delta^2)$, whereas Proposition \ref{prop:boundmi} gives that the rate penalty is $O(\delta \log(1/\delta))$.} than our bound of Proposition \ref{prop:boundmi}, but it is significantly less general as it crucially uses the fact that the input alphabet is binary.

Second, consider the \emph{integrated scheme}. In this case, the rate penalty comes from the fact that we need to pre-code the syndrome vector, as the output of the belief-propagation guided decimation algorithm does not coincide exactly with the given syndrome vector. In Section \ref{sec:integrated}, we have observed that the fraction of unfulfilled syndromes tends $0$ as $n$ goes large. Hence, also the rate penalty can be made arbitrarily small. However, the rigorous proof of these statements remains an open problem. 

Third, consider the \emph{chaining construction}. In this case, the rate penalty comes from the fact that the rate in the last block is $C_{\rm s}(W)<C(W)$, where $C(W)$ and $C_{\rm s}(W)$ denote the capacity and the symmetric capacity of the channel $W$, respectively. From formula \eqref{eq:ratechain}, we immediately obtain that the rate penalty is $O(1/k)$, where $k$ is the number of blocks.


\subsection{Computational Complexity}

First, consider \emph{Gallager's mapping}. For the scheme based on a single non-binary code, the computational complexity of the non-linear mapper scales as a linear function of the cardinality of the domain of the map. For the scheme based on $t$ binary codes, the computational complexity is linear in $t$. Let us briefly discuss why this is the case. First, assume that the input alphabet $\mathcal X$ of the channel is binary, i.e., $\mathcal X = \{0, 1\}$. Without loss of generality, assume that $p^*(0)= \bar{\alpha}$ for some $\bar{\alpha}<1/2$. Indeed, if $p^*(0)\ge 1/2$, we repeat the same argument with the roles of the input symbol $0$ and of the input symbol $1$ exchanged. Define $2^{-\bar{t}}$ as the largest power of $2$ that is smaller than $\bar{\alpha}$. Then, all the sequences of $t$ bits starting with $\bar{t}$ $0$s are assigned to the input symbol $0$. The remaining sequences, except those that start with $\bar{t}$ $1$s, are assigned to the input symbol $1$. The sequences that start with $\bar{t}$ $1$s are assigned in part to the input symbol $0$ and in part to the input symbol $1$. To do so, we simply rescale the probabilities and iterate the procedure above. Clearly, the complexity of this algorithm is linear in $t$. Consider now the case in which $\mathcal X$ is not binary and divide the $(0, 1)$ interval into $|\mathcal X|$ sub-intervals. Denote by $I_x$ the sub-interval of length $p^*(x)$ that corresponds to $x\in \mathcal X$. Without loss of generality, assume that $\arg\min_x p^*(x)=0$. Indeed, we can always re-label the input symbols. Let $p^*(0)=\bar{\alpha}$ and define $2^{-\bar{t}}$ as the largest power of $2$ that is smaller than $\bar{\alpha}$. Partition the interval $(0, 1)$ into $2^{\bar{t}}$ parts of the same size. If a sub-interval of size $2^{-\bar{t}}$ is included into $I_x$ for some $x\in \mathcal X$, then all the associated sequences of $t$ bits are assigned to $x$. If not, these sequences are split between two different input symbols and we are back to the case of a binary input alphabet. As a result, the complexity of this algorithm is linear in $t$ and linear in $|\mathcal X|$. Furthermore, note that the total variation distance between $p$ and $p^*$, call it $\delta$, scales with $2^{-t}$. Hence, the computational complexity scales as $\log (1/\delta)$.

Second, consider the \emph{integrated scheme}. This approach has the same computational complexity as the standard channel coding solution for the transmission over a symmetric channel.

Third, consider the \emph{chaining construction}. The computational complexity of this scheme scales as a linear function of the number of blocks $k$, as there are $k$ blocks to be encoded and decoded.



\subsection{Trade-off} \label{sc-tradeoff}

One way to compare the various schemes is to run simulations and
compare the resulting parameters (error probability, rate penalty and
computational complexity). For example, we could fix the
computational complexity and block length and plot the error
probability as a function of the rate for a specific channel.
Unfortunately, such a comparison is much more difficult and less
meaningful than it might seem at first. All the schemes discussed
are by definition ``low-complexity'' and there are myriads of ways
of improving each one of them with no clear way of evaluating the
computational complexity. 

As an example, think of polar codes.  Successive cancellation is
the standard decoding algorithm but, in order to get good performance,
it is typically necessary to strengthen the code and apply various
forms of list decoding \cite{TVa15}. The number of variants and
enhancements that have been proposed to date is vast and many of
those have a significant impact on the final result. Similar remarks
apply to the other components as well. As a result, it is hard to carry
out a fair comparison by numerical simulations.

We therefore take a different route here. Rather than using simulations
to gain insight, we use ``scaling'' relationships between the various
parameters to bring out some of the basic trade-offs.  The advantage
of this approach is that it does not depend on implementation
details, hence it makes it easier to see the broad picture.  The
disadvantage of this approach is that it hides some of the involved
constants. Furthermore, scaling relationships are only known for
pairs of parameters (e.g., the block length versus the gap to capacity)
while the remaining parameters (e.g., the error probability) are
fixed.

Let us start by introducing some notation and by reviewing these
basic scaling relationships.

\begin{itemize}
\item Let $\mu_{\rm cha}$ denote the \emph{scaling exponent} for
the \emph{channel coding} problem. In words, $\mu_{\rm cha}$
is the parameter so that to first order the block length of the
channel code is equal to $(1/\delta)^{\mu_{\rm cha}}$, where $\delta
= C(W)-R$ is the {\em gap to capacity}. As mentioned in the
introduction, for polar codes we have $3.579\le \mu_{\rm cha}
\le 4.714$ \cite{HAU14, GB14, MHU15}. For SC-LDPC codes, no rigorous
result about the scaling exponent is known, but a simple heuristic
argument suggests that $\mu_{\rm cha}\approx 3$. 

Let us briefly explain this heuristic argument. Recall that in
SC-LDPC codes we couple let's say $k$ codes together. There are two
effects that limit the finite-length performance. First, the block
length of each of the composite codes acts as the ``effective''
block length of the coupled code. Hence, we cannot approach capacity
closer than what this effective block length dictates. Second, at
the boundary we have a fixed rate loss and this rate loss is amortized
over the $k$ blocks, i.e., the amortized loss is of the order $1/k$.
Let $n$ be the block length of the overall code, which is obtained
by coupling $n^{\alpha}$ codes of effective block length $n^{1-\alpha}$,
for some $\alpha\in (0, 1)$. As a result, the loss due to the
boundary effects is of the order $1/n^{\alpha}$. Furthermore, the
best codes have a loss that scales as the square root of their block
length. This gives us a loss of $1/n^{(1-\alpha)/2}$. The overall
loss is the sum of these two components. We get the smallest such
sum if we choose $\alpha$ so that both losses are equal.  By setting
\begin{equation}
\frac{1}{n^\alpha} = \frac{1}{n^{(1-\alpha)/2}},
\end{equation}
we obtain $\alpha = 1/3$. Hence, each of the composite blocks should
have a length of $n^{2/3}$ and we should couple $n^{1/3}$ of them.
This gives an overall loss of the order of $n^{-1/3}$. In other words, the scaling
behavior is $n \sim (1/\delta)^3$.

\item Similarly, let $\mu_{\rm src}$ denote the \emph{scaling
exponent} for the lossless \emph{source coding} problem. Then, the
block length of the source code scales as $(1/\delta)^{\mu_{\rm
src}}$, where $\delta$ is the number of excess bits that our scheme
requires (normalized by the block length) compared to the theoretical
limit $H(X)$.

\item Let $\Xi_{\rm dec}(\delta)$ denote the decoding complexity
measured as the number of binary operations that are required {\em
per transmitted bit}.  This quantity scales as $(1/\delta)^a\cdot
\log^b(1/\delta)$.  For polar codes, we have  $a=0$ and $b=3$; and
for SC-LDPC codes, we have $a=1$ and $b=2$. Let us discuss these
expressions in more detail. 

This scaling is easily seen as follows. For polar codes, the number
of real operations needed to decode a whole block of length $n$
scales like $n \log n$.  Hence the number of real operations needed
per transmitted bit scales like $\log n$, and $n$ itself scales
like $(1/\delta)^{\rm cha}$, as we have discussed.  Hence, measured
in real operations, we get a complexity of $\log(1/\delta)$. Now,
let us compute the number of {\em binary} operations that are
required.  As we approach capacity, we need to refine the quantization
levels of the decoding operations.  In particular, we need
$\log(1/\delta)$ bits of accuracy in the operations and the complexity
of a basic operation is quadratic in the number of bits \cite{hassani2012polar}.  As a
result, the complexity measured in binary operations scales like
$\log^3(1/\delta)$ for polar codes. 

Similar considerations apply also to SC-LDPC codes. We need $1/\delta$
iterations and, as for polar codes, in each iteration one basic
operation requires $\log^2(1/\delta)$ binary operations.

\item Let $\Xi_{\rm enc}(\delta)$ denote the encoding
complexity. For polar codes, the operations are essentially the same
as for the channel coding problem. Hence, the same scaling
law holds. On the contrary, no scaling law is known for SC-LDPC codes.
\end{itemize}

First, consider the \emph{integrated scheme}. The total gap to
capacity $\delta$ is due to finite-length effects and it can be
decomposed into two contributions: $\delta_{\rm src}$ is the gap
relative to the source coding part, and $\delta_{\rm cha}$ is the
gap relative to the channel coding part. For the scheme that uses
sparse-graph codes, the gap relative to the source coding part is
due to the fact that we need to pre-code the information bits.
According to the numerical simulations in \cite{AMV15, KVNP14}, we
observe that $\delta_{\rm src}\to 0$ as $n\to \infty$. However, no
rigorous result is known as concerns the scaling of $\delta_{\rm
src}$. For the scheme that uses polar codes, the overall gap to
capacity $\delta$ has the same scaling as for the symmetric channel
coding problem, as proved in \cite{fong2016scaling}. The total
complexity $\Xi_{\rm tot}$ can be decomposed into two contributions:
the encoding complexity $\Xi_{\rm enc}$, and the decoding complexity
$\Xi_{\rm dec}$. Finally, the block length has to be chosen
sufficiently large so that both the source and the channel coding
task can be performed within the desired gaps. This discussion is
summarized by the following expressions:
\begin{equation}\label{eq:scalint}
\begin{split}
\delta&=\delta_{\rm src}+\delta_{\rm cha},\\
\Xi_{\rm tot} &= \Xi_{\rm enc}(\delta_{\rm src}) + \Xi_{\rm dec}(\delta_{\rm cha}), \\
n &= \max\left(\frac{1}{\delta_{\rm src}^{\mu_{\rm src}}},\frac{1}{\delta_{\rm cha}^{\mu_{\rm cha}}}\right).
\end{split}
\end{equation}

Second, consider the \emph{chaining construction}. Recall that we
divide the transmission into $k$ blocks.  In the first $k-1$ blocks,
we transmit close to the capacity of the asymmetric channel. In
these blocks we have two losses, both due to finite-length effects.
Let $\delta_{\rm{src}}$ be the loss due to the source coding part
and $\delta_{\rm{cha}}$ be the loss due the channel coding part.
In the last block, we transmit close to the symmetric capacity of
the channel, which by assumption is strictly smaller than the actual
capacity. Such a loss is amortized over the $k$ blocks.  Let the
amortized gap be denoted by $\delta_{\rm{xtr}}$. Then, we have that
$\delta_{\rm{xtr}}=1/k$. The scaling laws for this scheme can be
summarized as follows:
\begin{equation}\label{eq:scalchain}
\begin{split}
\delta&=\delta_{\rm{src}}+\delta_{\rm{cha}}+\delta_{\rm{xtr}},\\
\Xi_{\rm tot} &= \Xi_{\rm enc}(\delta_{\rm{src}}) + \Xi_{\rm dec}(\delta_{\rm{cha}}),  \\
n &= \frac{1}{\delta_{\rm{xtr}}} \max\left(\frac{1}{\delta_{\rm{src}}^{\mu_{\rm src}}},\frac{1}{\delta_{\rm{cha}}^{\mu_{\rm cha}}}\right).
\end{split}
\end{equation}
By assumption, the total gap to capacity is the sum of the three
contributions $\delta_{\rm{src}}$, $\delta_{\rm{cha}}$ and
$\delta_{\rm{xtr}}$. The total complexity $\Xi_{\rm tot}$ is equal
to the sum of the source encoding and the channel decoding complexities,
which are functions of their respective gaps. Finally, let us discuss
the scaling of the block length. There are $k$ blocks, where
$k=1/\delta_{\rm{xtr}}$; and the length of each block has to be
chosen sufficiently large so that both the source and the channel
coding task can be performed within the desired gaps.

If we compare the expressions \eqref{eq:scalint} for the integrated
scheme with the expressions \eqref{eq:scalchain} for the chaining
construction, we see that they are essentially the same, except
that the block length for the chaining construction contains an extra
factor $1/\delta_{\rm xtr}$. Viewed like this, the integrated scheme
seems a clear winner. However, recall that in the integrated scheme
we are limited to polar codes (if we want provable results), and
their scaling behavior is markedly suboptimal. On the contrary, for
the chaining construction we have more degrees of freedom and we
can potentially pick schemes with a better scaling behavior.

Third, consider \emph{Gallager's mapping}. Let us focus on the
solution with binary codes, and let $t$ be the number of such codes.
The total gap to capacity $\delta$ can be decomposed into two
contributions: $\delta_{\rm{xtr}}$ is due to the fact that we select
a rational approximation of the capacity-achieving distribution,
and $\delta_{\rm cha}$ is the usual gap due to finite-length effects
for the channel coding part. By Proposition \ref{prop:boundmi}, we
have $\delta_{\rm xtr} = 2^{-t}$, where we consider only the
leading term that is exponential in $t$ and neglect the lower order
terms that are linear in $t$. The total complexity $\Xi_{\rm tot}$
can be decomposed into two contributions: the encoding complexity
$\Xi_{\rm enc}$, which accounts for the construction of the non-linear
mapper and scales as $\log(1/\delta_{\rm xtr})$; and the decoding complexity $\Xi_{\rm dec}$. Finally, the block length has
to be chosen sufficiently large so that the channel coding task can
be performed within the desired gap. Note that the rates of each
of the $t$ codes tend to $0$, as their sum is close to the capacity
of the channel. Hence, the characterization of the scaling of the
block length does not follow from standard considerations, and it
remains an open problem. For this reason, we do not state an explicit
formula for the required block length, and the discussion is
summarized by the following expressions:
\begin{equation}
\begin{split}
\delta&=\delta_{\rm cha}+\delta_{\rm xtr},\\
\Xi_{\rm tot} &= \log\left(\frac{1}{\delta_{\rm xtr}}\right)+\Xi_{\rm dec}(\delta_{\rm cha}). \\
\end{split}
\end{equation}

\subsection{Universality}

We say that a coding scheme achieves capacity \emph{universally}
over a class of channels if it achieves the capacity of each channel
in the class at the same time. This means that the coding scheme
is not tailored to the specific channel, rather it can be used for
transmission over any channel in the class. SC-LDPC codes universally
achieve capacity over the class of B-DMCs~\cite{KRU13}. Polar codes,
on the contrary, are not universal, as the sets defined in
\eqref{eq:cardint1} depend on the transmission channel. Therefore,
several ``polar-like'' schemes have been developed to solve this
issue~\cite{HRunipol,SaWa16}. Note that by using the techniques in
\cite{HRunipol,SaWa16} we can achieve the actual capacity of the
channel as opposed to the compound capacity\footnote{Recall that for the transmission over asymmetric channels the compound capacity may be strictly smaller than the capacity of the individual channels, as the capacity-achieving input distribution varies with the channel.}.

First, consider \emph{Gallager's mapping}. This scheme is not universal, as different transmission channels require different capacity-achieving distributions, hence different mappings. On the contrary, the \emph{integrated scheme} and the \emph{chaining construction} are universal, provided that the underlying component codes are universal (e.g., we use either SC-LDPC codes or the ``polar-like'' schemes described in \cite{HRunipol,SaWa16}).

\subsection{Common Randomness}

First, consider \emph{Gallager's mapping} and let $W'$ be defined as in \eqref{eq:mapqex}. Then, if we use polar codes in order to achieve the symmetric capacity of $W'$, the values of the frozen bits have to be chosen randomly and shared between the transmitter and the receiver. As pointed out in \cite{HY13}, there exists at least one random seed that achieves an error probability no larger than the expected error probability (which is averaged over all possible choices of the frozen bits). Furthermore, at least a fraction $1-\gamma$ of the seeds achieves an error probability that is at most a factor $1/\gamma$ larger than the average. It remains an interesting open problem to establish whether we can still set the frozen bits to $0$, as in the binary symmetric case. 

Second, consider the \emph{integrated scheme}. This approach requires common randomness both in the version based on polar codes and in the version based on sparse graph codes. More specifically, in the polar version described in Section \ref{subsec:polar}, we fill the positions in $\mathcal F_{\rm r}$ with a sequence chosen uniformly at random and shared between the transmitted and the receiver. Furthermore, we encode the positions in $\mathcal F_{\rm d}$ via the random map defined in \eqref{eq:randround} that also needs to be shared. In the sparse graph version described in Section \ref{subsec:sparse}, the syndrome vector $S_2^{1:nH(X\mid Y)}$ is chosen uniformly at random and shared between the transmitter and the receiver.

Third, consider the \emph{chaining construction} and assume that
the three properties in the hypothesis of Theorem \ref{th:chain}
hold. By separating the source coding and channel coding parts of
the scheme, this approach does not require common randomness, hence
it can be interpreted as a derandomized version of the integrated
scheme. This establishes another connection between information
theory and the theory of derandomizing algorithms. Several applications
of derandomization to coding theory can be found in \cite{Mahdi10thesis},
i.e., information-theoretically secure schemes for the wiretap
channel, nearly optimal explicit measurement schemes for combinatorial
group testing, design of ensembles of capacity achieving codes, and
construction of codes arbitrarily close to the Gilbert-Varshamov
bound. Furthermore, the link between polarization and randomness
extraction is investigated in \cite{Ab15}, where applications to
the Slepian-Wolf problem and to secret key generation are provided.

Assume that we want to substitute the stringent condition
on the distance between the distributions of $g(U^{1:m})$ and
$X^{1:n}$ with the relaxed condition on the distance between the
distributions of their types. Then, as detailed in Proposition
\ref{prop:TVdisttype}, the transmitter and the receiver need to
share $k$ random permutations, where $k$ is the number of blocks
used for the transmission. We will now show that no shared
randomness is in fact necessary.

As a starting point, recall that the error probability under
the stringent condition with no random permutation is the same as the error
probability under the relaxed condition with random permutations.
Furthermore, this probability is upper bounded by $3 k \delta$.

The error probability is an average over all channel realizations
and all permutations. Hence, for any $\gamma > 0$, at least a fraction $1-\gamma$ of the permutations have an error probability of at most $3 k \delta/\gamma$. By picking $\gamma=\sqrt{3 k \delta}$, we have that, with probability at least $1-\sqrt{3 k \delta}$, a randomly chosen permutation
has an error probability of at most $\sqrt{3 k \delta}$. Hence, no shared randomness is needed, as a fixed set of $k$ permutations will work with high probability.

By using this same argument, we can also show that no shared randomness is needed in the integrated scheme, as a fixed set of bits for the positions in $\fset_{\rm r}$ and a fixed vector for the syndrome $S_2^{1:nH(X\mid Y)}$ will work with high probability.  

\section{Concluding Remarks}\label{sec:concl}

This paper discusses and compares the performance of three different paradigms to achieve the capacity of \emph{asymmetric} DMCs.

The first approach is based on \emph{Gallager's mapping}. The idea was first described in \cite{Gal68}, and it consists of employing a non-linear function in order to make the input distribution match the capacity-achieving one. In this way, we can achieve the capacity of asymmetric channels by using either $q$-ary or binary codes that are capacity-achieving for suitably defined symmetric channels. 

The second approach consists in an \emph{integrated scheme} that simultaneously performs the tasks of source coding and of channel coding. The idea was first presented for polar codes in \cite{HY13}, and here we extend it to sparse graph codes. Indeed, sparse graph codes can be effectively used to create
biased codewords from uniform bits (source coding part) and to provide error correction (channel coding part). Given the vector of syndromes, we generate the codeword by running a belief-propagation algorithm with decimation steps. This technique works well in practice, but the proof that the scheme is capacity-achieving remains an open problem.

The third approach consists in a \emph{chaining construction}, where we consider the transmission of $k$ blocks and use a part of the current block to store the syndromes coming from the previous block. The idea was first proposed in \cite{BoM11}, and here we show how to use it to provably achieve the capacity of asymmetric DMCs. By decoupling completely the source coding from the channel coding task, we can employ an optimal scheme to reach each of these two objectives separately. Thus, many combinations are possible: for example, we can use polar codes or homophonic codes for the source coding part, and polar codes or spatially coupled codes for the channel coding part. 


As for the integrated scheme and the chaining construction, we restrict our discussion to the case of binary-input channels. In order to extend our results to channels with an arbitrary finite input alphabet, we require schemes that solve the source coding and the channel coding tasks in the non-binary case. For the source coding part, several works have focused on the construction of polar codes for arbitrary input alphabets \cite{STA09, MT10, PB13, SaP13, NT16, Nas17a, Nas17b}. Furthermore, we can also easily generalize the solution based on homophonic coding to non-binary alphabets (the interval algorithm of \cite{HoHa01} works directly in the non-binary case). For the channel coding part, recall that in Section~\ref{sec:gallager} we have converted a non-binary channel into several binary channels by using the chain rule of mutual information (see formula \eqref{eq:chainrulegall}). Here, the same idea can be applied as well. Alternatively, we can use directly non-binary spatially coupled codes \cite{UKS11, PAC13, AA16, WKMFC14} or non-binary polar codes \cite{STA09, MT10, PB13, SaP13, NT16, Nas17a, Nas17b}.

\section*{Acknowledgment}
We would like to thank E. Telatar for helpful discussion. Furthermore, we would like to thank the editor Prof. H. D. Pfister and the two anonymous reviewers for the many comments that helped us to significantly improve the quality of this paper. This work was supported by grant No. 200020\_146832/1 of the Swiss National Science Foundation.

\appendix

\subsection{Proof of Propositions in Section ~\ref{subsec:symmcap}}\label{app:proofsecII1}

\begin{proof}[Proof of Proposition~\ref{prop:symm}]
By definition, we have that
\begin{equation*}
\begin{split}
\Ldens{a}^+ (y) \Delta y &\approx \int\limits_{t \in L^{-1}([y, y+\Delta y])} W(t\mid 1) \,\,dt = \int\limits_{t \in L^{-1}([y, y+\Delta y])} e^{L(t)}W(t\mid -1) \,\,dt \\
& \approx e^y\int\limits_{t \in L^{-1}([y, y+\Delta y])} W(t\mid -1) \,\,dt = e^y \Ldens{a}^- (y) \Delta y,
\end{split}
\end{equation*}
where $L^{-1}$ is the inverse of the log-likelihood ratio defined in~\eqref{eq:loglike}. By taking $\Delta y\to 0$, we obtain that 
\begin{equation}\label{eq:reexp}
\Ldens{a}^+ (y)=e^y\Ldens{a}^- (y).
\end{equation} 
With the change of variable $y\to -y$, we also obtain that
\begin{equation}\label{eq:reexp2}
\Ldens{a}^- (-y)=e^y\Ldens{a}^+ (-y).
\end{equation} 
As a result, condition \eqref{eq:symm} is fulfilled for the $L$-density $\Ldens{a}^{\rm s}(y)$ defined in \eqref{eq:ldenssymm} and the statement follows. 
\end{proof}

\begin{proof}[Proof of Proposition~\ref{prop:cap}]
Since the log-likelihood ratio constitutes a sufficient statistic, two B-DMCs are equivalent if they have the same $L$-densities given that $X=\pm 1$ is transmitted. As a representative for the equivalence class, we can take 
\begin{equation}\label{eq:defcap}
\begin{split}
W(y\mid 1) &= \Ldens{a}^+(y),\\ 
W(y\mid -1) &= \Ldens{a}^-(y).\\ 
\end{split}
\end{equation} 
By definition of log-likelihood ratio and by using \eqref{eq:reexp}, we have
\begin{equation*}\label{eq:loglikedens}
L(y) = \ln \frac{W(y\mid 1)}{W(y\mid -1)} = \ln \frac{\Ldens{a}^+(y)}{\Ldens{a}^-(y)} = y. 
\end{equation*}
Therefore,
\begin{equation*}
\lim_{\Delta y\to 0} \frac{{\mathbb P}(L(Y)\in [y, y+\Delta y]\mid X = \pm 1)}{\Delta y} = \Ldens{a}^{\pm}(y),
\end{equation*}
which means that \eqref{eq:defcap} is a valid choice. 

Let $X$ be uniformly distributed. Then, after some calculations we have that
\begin{equation}\label{eq:pr1cap}
\begin{split}
C_{\rm s}(W) = H(Y) - H(Y\mid X) &= \frac{1}{2} \int  W(y\mid 1) \log \frac{2W(y\mid 1)}{W(y\mid 1)+W(y\mid -1)}\,\,dy \\
&+\frac{1}{2} \int  W(y\mid -1) \log \frac{2W(y\mid -1)}{W(y\mid 1)+W(y\mid -1)}\,\,dy.
\end{split}
\end{equation}
By applying \eqref{eq:defcap} and \eqref{eq:reexp}, the first integral simplifies to 
\begin{equation}\label{eq:pr1cap2}
\frac{1}{2} \int \Ldens{a}^+(y) \left(1-\log (1+e^{-y})\right)\,\,dy.
\end{equation}
By applying \eqref{eq:defcap}, doing the change of variables $y\to -y$ and using \eqref{eq:reexp2}, the second integral simplifies to 
\begin{equation}\label{eq:pr1cap3}
\frac{1}{2} \int \Ldens{a}^-(-y) \left(1-\log (1+e^{-y})\right)\,\,dy.
\end{equation}
By combining \eqref{eq:pr1cap}, \eqref{eq:pr1cap2}, and \eqref{eq:pr1cap3}, the result follows. 

\end{proof}

\subsection{Proof of Propositions in Section ~\ref{subsec:checks}}\label{app:proofsecII2}

\begin{proof}[Proof of Proposition~\ref{prop:symmalpha}]
By definition, we have that
\begin{equation*}
\begin{split}
\bar{\alpha}\Ldens{a}_{\rm p}^+ (y) \Delta y &\approx \int\limits_{t \in L^{-1}_{\rm p}([y, y+\Delta y])} \bar{\alpha} W(t\mid 1) \,\,dt = \int\limits_{t \in L^{-1}_{\rm p}([y, y+\Delta y])} \alpha e^{L_{\rm p}(t)}W(t\mid -1) \,\,dt \\
& \approx e^y\int\limits_{t \in L^{-1}_{\rm p}([y, y+\Delta y])} \alpha W(t\mid -1) \,\,dt = e^y \alpha \Ldens{a}_{\rm p}^- (y) \Delta y,
\end{split}
\end{equation*}
where $L^{-1}_{\rm p}$ is the inverse of $L_{\rm p}$ defined in~\eqref{eq:loglikepost}. By taking $\Delta y\to 0$, we obtain that 
\begin{equation}\label{eq:reexpalpha}
\bar{\alpha}\Ldens{a}_{\rm p}^+(y)=e^y \alpha \Ldens{a}_{\rm p}^- (y).
\end{equation} 
With the change of variable $y\to -y$, we also obtain that
\begin{equation}\label{eq:reexp2alpha}
\alpha\Ldens{a}_{\rm p}^- (-y)=\bar{\alpha}e^y\Ldens{a}_{\rm p}^+ (-y).
\end{equation} 
As a result, condition \eqref{eq:symm} is fulfilled for $\Ldens{a}_{\rm p}^{\rm s}(y)$ and the statement follows. 
\end{proof}

\begin{proof}[Proof of Proposition~\ref{prop:capalpha}]
Since the log-likelihood ratio constitutes a sufficient statistic, two B-DMCs with non-uniform input distributions are equivalent if they have the same densities of the log-posterior ratio given that $X=\pm 1$ is transmitted. As a representative for the equivalence class, we can take 
\begin{equation}\label{eq:defcapalpha}
\begin{split}
W(y\mid 1) &= \Ldens{a}_{\rm p}^+(y),\\ 
W(y\mid -1) &= \Ldens{a}_{\rm p}^-(y).\\ 
\end{split}
\end{equation} 
By definition of log-posterior ratio and by using \eqref{eq:reexpalpha}, we have
\begin{equation*}\label{eq:loglikedensalpha}
L_{\rm p}(y) = \ln \frac{p_{X\mid Y}(1\mid y)}{p_{X\mid Y}(-1\mid y)} = \ln \frac{\bar{\alpha}\Ldens{a}_{\rm p}^+(y)}{\alpha\Ldens{a}_{\rm p}^-(y)} = y. 
\end{equation*}
Therefore,
\begin{equation*}
\lim_{\Delta y\to 0} \frac{{\mathbb P}(L_{\rm p}(Y)\in [y, y+\Delta y]\mid X = \pm 1)}{\Delta y} = \Ldens{a}_{\rm p}^{\pm}(y),
\end{equation*}
which means that \eqref{eq:defcapalpha} is a valid choice. 

Let $X\in \{-1, 1\}$ be s.t. ${\mathbb P}(X=-1) = \alpha$. Then, after some calculations we have that
\begin{equation}\label{eq:pr1capalpha}
\begin{split}
H(X \mid  Y) = - \int \bar{\alpha} W(y\mid 1)& \log \frac{\bar{\alpha}W(y\mid 1)}{\bar{\alpha}W(y\mid 1)+\alpha W(y\mid -1)}dy 
- \int \alpha W(y\mid -1) \log \frac{\alpha W(y\mid -1)}{\bar{\alpha} W(y\mid 1)+\alpha W(y\mid -1)}dy.
\end{split}
\end{equation}
By applying \eqref{eq:defcapalpha} and \eqref{eq:reexpalpha}, the first integral simplifies to 
\begin{equation}\label{eq:pr1cap2alpha}
\int \bar{\alpha}\Ldens{a}_{\rm p}^+(y) \left(1-\log (1+e^{-y})\right)\,\,dy.
\end{equation}
By applying \eqref{eq:defcapalpha}, doing the change of variables $y\to -y$ and using \eqref{eq:reexp2alpha}, the second integral simplifies to 
\begin{equation}\label{eq:pr1cap3alpha}
\int \alpha\Ldens{a}_{\rm p}^-(-y) \left(1-\log (1+e^{-y})\right)\,\,dy.
\end{equation}
By combining \eqref{eq:pr1capalpha}, \eqref{eq:pr1cap2alpha}, and \eqref{eq:pr1cap3alpha}, the result follows. 

\end{proof}

\subsection{Proof of Proposition~\ref{prop:boundmi}}\label{app:boundmi}

Before starting with the proof of the proposition, let us state the
following useful result \cite{ZZ07} that is a refinement of \cite[Lemma 2.7]{CS11}.
\begin{lemma}\label{lm:CK}
Consider two distributions $p$ and $p^*$ over the alphabet $\mathcal
X$ s.t. their total variation distance is equal to $\delta$, i.e., $\frac{1}{2}\sum_{x\in {\mathcal X}}|p^*(x)-p(x)| = \delta$. Take $X\sim p$ and $X^*\sim p^*$. Then,
\begin{equation}
\begin{split}
|H(X^*)-H(X)| &\le \delta \log (|{\mathcal X}|-1) +h_2(\delta)< \delta \log |{\mathcal X}| +h_2(\delta). 
\end{split}
\end{equation}
\end{lemma}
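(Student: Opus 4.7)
The plan is to prove this lemma via a maximal coupling argument followed by a Fano-type inequality, which is the standard path to a refined Csisz\'ar--K\"orner bound. First, I would construct the maximal coupling of $X$ and $X^*$ on $\mathcal{X} \times \mathcal{X}$: set the joint distribution so that $\mathbb{P}(X = X^* = x) = \min(p(x), p^*(x))$ for every $x \in \mathcal{X}$, and distribute the remaining mass arbitrarily on the off-diagonal so as to respect the marginals. This classical construction yields $\mathbb{P}(X = X^*) = \sum_{x} \min(p(x), p^*(x)) = 1 - \delta$, so if one defines the indicator $E = \mathbbm{1}\{X \neq X^*\}$, then $E$ is Bernoulli$(\delta)$.

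Second, I would exploit the symmetric identity obtained by expanding the joint entropy $H(X, X^*)$ in both orders, namely $H(X^*) - H(X) = H(X^* \mid X) - H(X \mid X^*)$. Thus it suffices to upper bound each of $H(X^* \mid X)$ and $H(X \mid X^*)$ by $\delta \log(|\mathcal{X}|-1) + h_2(\delta)$, because then each of $H(X^*) - H(X)$ and $H(X) - H(X^*)$ is at most that quantity, giving the claimed two-sided bound on the absolute value.

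Third, I would apply a Fano-type argument to $H(X^* \mid X)$. Writing $H(X^* \mid X) \le H(X^*, E \mid X) = H(E \mid X) + H(X^* \mid X, E) \le h_2(\delta) + H(X^* \mid X, E)$, it remains to bound the last term. Conditioned on $E = 0$ one has $X^* = X$, so $H(X^* \mid X, E = 0) = 0$; conditioned on $E = 1$ and any fixed value of $X$, the variable $X^*$ is supported on at most $|\mathcal{X}| - 1$ symbols, so $H(X^* \mid X, E = 1) \le \log(|\mathcal{X}| - 1)$. Averaging over $E$ then yields $H(X^* \mid X) \le \delta \log(|\mathcal{X}| - 1) + h_2(\delta)$, and the same argument with the roles of $X$ and $X^*$ interchanged gives the matching bound on $H(X \mid X^*)$. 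Combining with the identity above proves the first inequality; the second, strict inequality in the statement is immediate from $|\mathcal{X}| - 1 < |\mathcal{X}|$.

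The only subtle point I expect is justifying that the coupling can be chosen so that $\mathbb{P}(X \neq X^*)$ actually equals the total variation distance $\delta$ (rather than merely being upper bounded by a looser quantity), which is essential to obtain the improvement from $\log |\mathcal{X}|$ to $\log(|\mathcal{X}|-1)$ inside the Fano step. Once the maximal coupling is in hand, the argument is routine; the condition $\delta \in (0, 1/8)$ plays no role here and will enter only when this lemma is invoked in the proof of Proposition~\ref{prop:boundmi}.
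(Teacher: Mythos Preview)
Your proof is correct. The maximal coupling followed by the Fano-type expansion is exactly the standard route to this refinement, and every step you outline goes through: the coupling achieves $\mathbb{P}(X\neq X^*)=\delta$, the chain-rule identity $H(X^*)-H(X)=H(X^*\mid X)-H(X\mid X^*)$ holds on the joint distribution, and the conditioning on the error indicator $E$ gives the $\log(|\mathcal X|-1)$ factor because $X^*$ is confined to $\mathcal X\setminus\{X\}$ on the event $\{E=1\}$.

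As for comparison with the paper: there is nothing to compare. The paper does not prove this lemma at all; it merely states it as a known result, citing \cite{ZZ07} and noting that it refines \cite[Lemma~2.7]{CS11}. Your write-up therefore supplies what the paper omits, and it matches the argument one finds in those references. One cosmetic remark: in your third step you could replace the inequality $H(X^*\mid X)\le H(X^*,E\mid X)$ by an equality, since $E$ is a deterministic function of $(X,X^*)$ and hence $H(E\mid X,X^*)=0$; this changes nothing in the bound but tightens the exposition.
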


\begin{proof}[Proof of Proposition~\ref{prop:boundmi}]
Let $X\sim p$, $X^*\sim p^*$ and denote by $Y\sim p_Y$ and $Y^* \sim p_Y^*$ the outputs of the channel when the input is $X$ and $X^*$, respectively. Denote by $W(y\mid x)$ the probability distribution associated with the channel $W$. In order to prove \eqref{eq:boundY}, we write 
\begin{equation}\label{eq:basic1}
|I(p^*)-I(p)| \le |H(Y^*)-H(Y)|+|H(Y^*\mid X^*)-H(Y \mid X)|,
\end{equation}
and we bound both terms as functions of $\delta$ and $|\mathcal
Y|$. For the first term, observe that
\begin{equation}\label{eq:boundpy}
\begin{split}
\frac{1}{2}\sum_{y\in {\mathcal Y}}&|p_{Y}^*(y)-p_Y(y)| \le \frac{1}{2} \sum_{y\in {\mathcal Y}}\sum_{x\in {\mathcal X}} W(y\mid x) |p^*(x)-p(x)| <\delta,
\end{split}
\end{equation}
where it is used the fact that $\sum_{y\in {\mathcal Y}}W(y\mid x)=1$
for any $x\in \mathcal X$. Then, by using Lemma \ref{lm:CK} and the fact that $h_2(\delta)$ is increasing for any $\delta \in (0, 1/2)$, we obtain that
\begin{equation}\label{eq:inter11}
|H(Y^*)-H(Y)| < \delta \log |{\mathcal Y}| +h_2(\delta).
\end{equation}
For the second term, observe that the conditional distribution of
$Y^*$ given $X^*=x$ and the conditional distribution of $Y$ given
$X=x$ are both equal to $W(y\mid x)$. Therefore,
\begin{equation*}
H(Y \mid X=x) = H(Y^* \mid X^*=x) \le \log |{\mathcal Y}|.
\end{equation*}
Consequently, 
\begin{equation}\label{eq:inter12}
\begin{split}
|H(Y^*\mid X^*)-H(Y\mid X)| &\le \sum_{x\in {\mathcal X}}|p^*(x)-p(x)| H(Y \mid X=x) < 2\delta \log|{\mathcal Y}|.
\end{split}
\end{equation}
By combining \eqref{eq:basic1} with \eqref{eq:inter11} and
\eqref{eq:inter12}, we obtain the desired result.

In order to prove \eqref{eq:boundX}, we write
\begin{equation}\label{eq:basic2}
|I(p^*)-I(p)| \le |H(X^*)-H(X)|+|H(X^*\mid Y^*)-H(X \mid Y)|,
\end{equation}
and we bound both terms with functions of $\delta$ and $|\mathcal
X|$. The first term is easily bounded by using Lemma \ref{lm:CK} and the fact that $h_2(\delta)$ is increasing for any $\delta \in (0, 1/2)$,
\begin{equation}\label{eq:inter21}
|H(X^*)-H(X)| < \delta \log |{\mathcal X}| +h_2(\delta).
\end{equation} 
For the second term, consider the conditional distribution of $X^*$
given $Y^*=y$, i.e., $p_{X\mid Y}^*(x\mid y) = p^*(x)W(y\mid
x)/p_Y^*(y)$, and the conditional distribution of $X$ given $Y=y$,
i.e., $p_{X\mid Y}(x\mid y) = p(x)W(y\mid x)/p_{Y}(y)$.
Then,
\begin{equation}\label{eq:inter22}
\begin{split}
|H&(X^*\mid Y^*)-H(X\mid Y)| =|\sum_{y\in {\mathcal Y}} p_{Y}^*(y) H(X^*|Y^* = y) - p_{Y}(y) H(X|Y = y)| \\
&\le|\sum_{y\in {\mathcal Y}} p_{Y}^*(y) H(X^*|Y^* = y) - p_{Y}(y) H(X^*|Y^* = y)| +|\sum_{y\in {\mathcal Y}} p_{Y}(y) H(X^*|Y^* = y) - p_{Y}(y) H(X|Y = y)|.
\end{split}  
\end{equation}
In order to bound the first term of \eqref{eq:inter22}, observe
that $H(X^*|Y^* = y)\le \log|\mathcal X|$ for any $y\in \mathcal
Y$. Therefore, by using \eqref{eq:boundpy}, we obtain 
\begin{equation}\label{eq:inter221}
|\sum_{y\in {\mathcal Y}} p_{Y}^*(y) H(X^*|Y^* = y) - p_{Y}(y) H(X^*|Y^* = y)|
< 2\delta \log |\mathcal X|.
\end{equation}
For the second term of \eqref{eq:inter22}, let us denote by $d(y)$
the total variation distance between $p_{X\mid Y}^*(x\mid y)$ and $p_{X\mid
Y}(x\mid y)$, namely,
\begin{equation*}
d(y) = \frac{1}{2}\sum_{x\in {\mathcal X}} |p_{X\mid Y}^*(x\mid y)-p_{X\mid Y}(x\mid y)|.
\end{equation*} 
Then, by Lemma \ref{lm:CK},
\begin{equation*}
|H(X^*|Y^* = y)-H(X|Y = y)| < d(y) \log |\mathcal X| +h_2(d(y)),
\end{equation*}
which implies that
\begin{equation}
\begin{split}\label{eq:inter222}
|\sum_{y\in {\mathcal Y}} p_{Y}(y) H(X^*|Y^* = y) - p_{Y}(y) H(X|Y = y)|< \log|\mathcal X| \sum_{y\in {\mathcal Y}} p_Y(y)d(y) +\sum_{y\in {\mathcal Y}} p_Y(y)h_2(d(y)).
\end{split}
\end{equation}
Now, let us focus on the quantity $\sum_{y\in {\mathcal Y}} p_Y(y)d(y)$:
\begin{equation*}
\begin{split}
\sum_{y\in {\mathcal Y}} p_Y(y)d(y) &= \sum_{y\in {\mathcal Y}} \sum_{x\in {\mathcal X}} |p_Y(y)p^*_{X\mid Y}(x|y)-p_Y(y)p_{X\mid Y}(x|y)| \\
&\le \sum_{y\in {\mathcal Y}} \sum_{x\in {\mathcal X}} |p_Y(y)p^*_{X\mid Y}(x|y)-p^*_Y(y)p^*_{X\mid Y}(x|y)|+\sum_{y\in {\mathcal Y}} \sum_{x\in {\mathcal X}} |p^*_Y(y)p^*_{X\mid Y}(x|y)-p_Y(y)p_{X\mid Y}(x|y)|\\
&= \sum_{y\in {\mathcal Y}} |p^*_Y(y)-p_Y(y)|\sum_{x\in {\mathcal X}} p^*_{X\mid Y}(x|y)+\sum_{x\in {\mathcal X}} |p^*(x)-p(x)|\sum_{y\in {\mathcal Y}} W(y\mid x) < 4\delta.
\end{split}
\end{equation*}
Observe that $h_2(t)$ is concave for any
$t \in (0, 1)$ and increasing for $t \le 1/2$. Then, as $\delta < 1/8$,
\begin{equation*}
\begin{split}
\sum_{y\in {\mathcal Y}} p(y) h_2(d(y))\le h_2(\sum_{y\in {\mathcal Y}} p(y)d(y)) < h_2(4\delta).
\end{split}
\end{equation*}
By combining \eqref{eq:basic2} with \eqref{eq:inter21}, \eqref{eq:inter22}, \eqref{eq:inter221}, and \eqref{eq:inter222}, the result follows. 
\end{proof}

\bibliographystyle{IEEEtran}
\bibliography{lth,lthpub}


\end{document}